\newcommand{\qed}{\hspace*{\fill}$\Box$}
\newtheorem{theorem}{Theorem}[section]
\newtheorem{lemma}[theorem]{Lemma}
\newtheorem{corollary}[theorem]{Corollary}
\newtheorem{definition}[theorem]{Definition}
\newtheorem{claim}[theorem]{Claim}
\newenvironment{proof}[1][Proof. ]{\noindent {\bf #1 }}{\qed}
\newcommand{\Xomit}[1]{}
\begin{document}

\title{Multiple Traveling Salesmen in Asymmetric Metrics}

\author{Zachary Friggstad\footnote{Research supported by an iCORE ICT/AITF award while studying at the University of Alberta.} \\
Department of Combinatorics and Optimization\\ University of Waterloo\\ \texttt{zfriggstad@math.uwaterloo.ca}}

\begin{titlepage}
\maketitle
\thispagestyle{empty}

\begin{abstract}
We consider some generalizations of the Asymmetric Traveling Salesman Path problem. Suppose we have an asymmetric metric $G = (V,A)$ with two distinguished nodes $s,t \in V$. We are also given a positive integer $k$. The goal is to find $k$ paths of minimum total cost from $s$ to $t$ whose union spans all nodes. We call this the $k$-Person Asymmetric Traveling Salesmen Path problem ($k$-ATSPP). Our main result for $k$-ATSPP is a bicriteria approximation that, for some parameter $b \geq 1$ we may choose, finds between $k$ and $k + \frac{k}{b}$ paths of total length $O(b \log |V|)$ times the optimum value of an LP relaxation based on the Held-Karp relaxation for the Traveling Salesman problem. On one extreme this is an $O(\log |V|)$-approximation that uses up to $2k$ paths and on the other it is an $O(k \log |V|)$-approximation that uses exactly $k$ paths.

Next, we consider the case where we have $k$ pairs of nodes $\{(s_i, t_i)\}_{i=1}^k$. The goal is to find an $s_i-t_i$ path for every pair such that each node of $G$ lies on at least one of these paths. Simple approximation algorithms are presented for the special cases where the metric is symmetric or where $s_i = t_i$ for each $i$. We also show that the problem can be approximated within a factor $O(\log n)$ when $k=2$. On the other hand, we demonstrate that the general problem cannot be approximated within any bounded ratio unless P = NP.
\end{abstract}

\end{titlepage}

\section{Introduction}

We consider generalizations of the metric Traveling Salesman problem. In most of our settings we are given a complete directed graph $G = (V,A)$ with $n$ nodes. There are nonnegative arc costs $d_{uv}, uv \in A$ satisfying the directed triangle inequality $d_{uv} \leq d_{uw} + d_{wv}$ for any distinct $u,v,w \in V$. In general, $d_{uv} \neq d_{vu}$ for $u,v \in V$ so we call such graphs {\em asymmetric metrics}. Some well-studied Traveling Salesman variants in asymmetric metrics are to find the minimum cost Hamiltonian cycle or minimum cost Hamiltonian path where some of the endpoints of the path may be specified in advance. All of these problems are NP-hard to approximate within small constant factors \cite{papadimitriou:vempala}.

We will mainly study problems concerning finding multiple paths in asymmetric metrics whose union covers all nodes while minimizing the total cost of these paths. This generalizes the Asymmetric Traveling Salesman Path problem (ATSPP). Formally, define $k$-ATSPP to be the following problem. Given an asymmetric metric $G = (V,A)$ with arc costs $d_{uv}$ and two distinct nodes $s,t \in V$, we want to find $k$ paths from $s$ to $t$ in $G$ such that every node $v \in V$ lies on at at least one of these paths. Finally, we define General $k$-ATSPP to be the following generalization of $k$-ATSPP. We are given $k$ pairs of nodes $(s_i,t_i), \ldots, (s_k,t_k)$ in an asymmetric metric $G$ and we are to find an $s_i-t_i$ path for each $i$ so each $v \in V$ lies on at least one of these paths. Again, the goal is find such paths of minimum total cost.

One thing that makes $k$-ATSPP an attractive variant of ATSPP is that the gap between optimum solutions for different values of $k$ in an asymmetric metric may be arbitrarily large. For example, the instance in Figure \ref{fig:katspp_diff} has a solution of cost 0 using 2 paths but any single path has cost at least 1. One way to think about this is that it might be efficient to hire additional salesmen to cover the locations in an asymmetric metric. On the other hand, we do not have this large gap in symmetric metrics ({\em i.e.} when $d_{uv} = d_{vu}$ for all $u,v \in V$) because a single salesman can cover all paths by traveling back and forth between $s$ and $t$ and cover all locations with no greater cost (if $k$ is even then one final step from $s$ to $t$ makes it an $s-t$ path while adding an extra $OPT/k$ to the cost).

\begin{figure}
\centering
\includegraphics[scale=0.6]{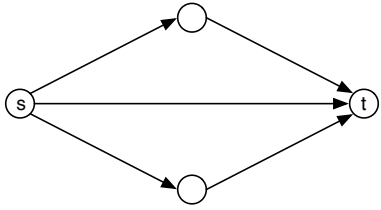}
\caption{
The pictured arcs all have cost 0 and the missing arcs have cost 1. The gap between optimum solutions for $k=1$ and $k=2$ is unbounded.
}
\label{fig:katspp_diff}
\end{figure}

\subsection{Related Work}
In the well-studied Traveling Salesman problem (TSP), the goal is to find a Hamiltonian cycle of minimum total edge cost in a symmetric metric. A classic result of Christofides \cite{christofides} is a polynomial-time algorithm for TSP that finds a Hamiltonian cycle with cost at most $\frac{3}{2}$ times the cost of the optimum solution. Hoogeveen \cite{hoogeveen} adapted this algorithm to the problem of finding minimum cost Hamiltonian paths. He obtains a $\frac{3}{2}$-approximation if at most one endpoint is fixed in advance and a $\frac{5}{3}$-approximation if both endpoints are fixed in advance. Recently, An, Kleinberg, and Shmoys have improved the approximability of the case when both endpoints are fixed to $\frac{1+\sqrt{5}}{2} < 1.6181$ \cite{an:kleinberg:shmoys}.

In asymmetric metrics, Frieze, Galbiati, and Maffioli \cite{frieze:galbiati:maffioli} gave the first approximation algorithm for ATSP with an approximation ratio of $\log_2 n$ where $n = |V|$. A series of papers improved on this ratio by constant factors \cite{blaser, kaplan, feige:singh} with the last being $\frac{2}{3} \log_2 n$. Finally, Asadpour {\em et al.} \cite{asadpour:atsp} produced an asymptotically better approximation algorithm for ATSP with ratio $O(\log n / \log\log n)$.

The variant of finding Hamiltonian paths in asymmetric metrics, namely ATSPP, has only recently been studied from the perspective of approximation algorithms. The first approximation algorithm was an $O(\sqrt n)$-approximation by Lam and Newman \cite{lam:newman}. Following this, Chekuri and Pal \cite{chekuri:pal:atspp} brought the ratio down to $O(\log n)$. Finally, Feige and Singh \cite{feige:singh} proved that an $\alpha$-approximation for ATSP implies a $(2 + \epsilon)\alpha$-approximation for ATSPP for any constant $\epsilon > 0$. Combining their result with the recent ATSP algorithm in \cite{asadpour:atsp} yields an $O(\log n / \log\log n)$-approximation for ATSPP.

There is a linear programming (LP) relaxation for each of these problems based on the Held-Karp relaxation for TSP \cite{held:karp}. For TSP, this relaxation is:
\[
\begin{array}{rrcll}
{\rm minimize}: & \sum_{uv \in E} d_{uv} x_{uv} & & & \\
{\rm such~that}: & x(\delta(v)) &=& 2& \forall~ v \in V \\
 & x(\delta(S)) &\geq & 2 & \forall~ \emptyset \subsetneq S \subsetneq V \\
 & x_{uv} &\in & [0,1] & \forall~ uv \in E
\end{array}
\]
Many of the approximation algorithms mentioned above also bound the integrality gap of the respective Held-Karp LP relaxation. For TSP, Wolsey \cite{wolsey:heuristic} proved the solutions found by Christofides' algorithm \cite{christofides} are within 3/2 of the optimal solution to the above LP relaxation. For ATSP, Williamson \cite{williamson:msthesis} proved that the algorithm of Frieze {\em et al.} \cite{frieze:galbiati:maffioli} bounds the integrality gap of its respective LP by $\log_2 n$. The improved $O(\log n / \log\log n)$-approximation for ATSP in \cite{asadpour:atsp} improved the bound on gap to the same ratio. For TSP paths, An, Kleinberg and Shmoys \cite{an:kleinberg:shmoys} first showed that Hoogeveen's algorithm bounds the integrality gap of a Held-Karp type relaxation for TSP paths by $\frac{5}{3}$ in cases where both endpoints are fixed. In the same paper they argue that their $\frac{1+\sqrt 5}{2}$-approximation for this case also bounds the integrality gap by the same factor.

Nagarajan and Ravi \cite{nagarajan:ravi:latency} first showed that the integrality gap of an LP relaxation for ATSPP, which is the same as LP (\ref{lp:katspp}) in this paper when $k=1$, was $O(\sqrt n)$. Later Friggstad, Salavatipour, and Svitkina \cite{friggstad:salavatipour:svitkina} showed a bound of $O(\log n)$ in the integrality gap of this LP relaxation which is currently the best bound. We note that the result of Feige and Singh in \cite{feige:singh} that relates the approximability of ATSP and ATSPP does not extend to their integrality gaps in any obvious way.

In the full version of \cite{friggstad:salavatipour:svitkina}, the authors studied extensions of their $O(\log n)$-approximation for ATSPP to $k$-ATSPP. They demonstrated that $k$-ATSPP can be approximated within $O(k^2 \log n)$ and that this bounds the integrality gap of LP (\ref{lp:katspp}) by the same factor. Though not stated explicitly, their techniques can also be used to devise a bicriteria approximation for $k$-ATSPP that uses $O(k \log n)$ paths of total cost at most $O(k \log n)$ times the value of LP (\ref{lp:katspp}) in a manner similar to the algorithm in the proof of Theorem 1.3 in \cite{friggstad:salavatipour:svitkina}. As far as we know, no results are known for General $k$-ATSPP even for the case $k=2$.

One other problem related to one we consider is the following. We are given $2k$ distinct nodes  $S = \{s_1, \ldots, s_k\}$ and $T = \{t_1, \ldots, t_k\}$ in a symmetric metric. We want to find $k$ paths whose union spans all nodes. This should be such that each node in $S$ is the start node of exacly one path and each node in $T$ is the end node of exactly one path. Matroid intersection techniques used by Rathinam and Sengupta \cite{rathinam:sengupta} can be easily adapted to approximate this problem within a factor 2.

\subsection{Our Results}

By the directed triangle inequality, it is easy to see that there is an optimum solution for an instance of $k$-ATSPP where each node in $V-\{s,t\}$ lies on precisely one of the $k$ paths. Such an optimum solution corresponds to an integer point in LP (\ref{lp:katspp}) of the same cost. So the optimum value of LP (\ref{lp:katspp}), say $OPT_{LP}$, is a lower bound for the minimum cost $k$-ATSPP solution. Our main result for $k$-ATSPP is the following.
\begin{theorem} \label{thm:katspp}
For any integer $b \geq 1$, there is an efficient algorithm for $k$-ATSPP that finds between $k$ and $k + \frac{k}{b}$ paths of total cost at most $O(b \log n) \cdot OPT_{LP}$.
\end{theorem}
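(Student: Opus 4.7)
My plan is to prove Theorem \ref{thm:katspp} by an LP rounding algorithm in the iterative cycle cover style of Frieze-Galbiati-Maffioli, with the parameter $b$ tuning a tradeoff between per-iteration cost and the final number of paths.

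First I would close the instance by adding $k$ parallel zero-cost arcs from $t$ to $s$ and identifying $s$ with $t$; a feasible $k$-ATSPP integer solution then corresponds to a spanning Eulerian subgraph in which the identified node has in- and out-degree $k$ and every other node has in- and out-degree $1$, with the $k$ cycles through $s=t$ being the desired $s$-$t$ paths. Writing LP (\ref{lp:katspp}) in this closed form preserves its value at $OPT_{LP}$, and the fractional solution $x^*$ still satisfies a Held-Karp-type connectivity condition on top of the Eulerian degree constraints.

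Second, from $x^*$ I would extract an integer cycle cover of the closed graph of cost at most $OPT_{LP}$. Because the degree constraints are tight at every node, $x^*$ lies in the convex hull of integer $b$-matchings on this support, so a standard bipartite matching / network flow rounding produces an integer cover at no increase in cost; it consists of $k$ integer $s$-$t$ paths together with some leftover cycles in $V \setminus \{s,t\}$ that need to be absorbed. I would then iterate in the FGM fashion: contract each path and each leftover cycle to a representative node, re-solve the LP on the contracted instance (its value stays at most $OPT_{LP}$ by a restriction argument using the triangle inequality), and extract a new cycle cover whose arcs merge components. The standard FGM analysis shows $O(\log n)$ rounds suffice to collapse all leftover cycles into the $k$ paths at total cost $O(\log n) \cdot OPT_{LP}$ --- this is the $O(\log n) \cdot OPT_{LP}$, $2k$-path baseline that corresponds to $b=1$.

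The bicriteria refinement is the heart of the theorem. To move toward the $b=k$ extreme I would strengthen each round: instead of one cycle cover I would run $b$ cycle covers on the current contracted instance (per-round cost $b \cdot OPT_{LP}$) and combine their arcs so that far more components are stitched together per round; in exchange, I would permit myself to ``leave behind'' as many as $k/b$ of the persistent leftover cycles and promote each into an additional $s$-$t$ path by opening the cycle at a cheapest pair of nodes and attaching them to $s$ and $t$. Because the total number of path-promotions is capped by $k/b$, the final number of paths is between $k$ and $k + k/b$, and $O(\log n)$ rounds at $b \cdot OPT_{LP}$ each give the claimed $O(b \log n) \cdot OPT_{LP}$ bound. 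The step I expect to be the main obstacle is the amortized charging for the promotions: each promotion pays two arcs $s \to v$ and $v' \to t$ that must be billed to LP mass crossing the cycle's cut, and one must argue that the $b$-scaled LP budget in one round suffices to cover both the merging work and at most $k/b$ such promotions spread over the entire algorithm.
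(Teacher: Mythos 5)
There is a genuine gap, and it sits exactly where you flagged it: the ``promotion'' step. Opening a leftover cycle and attaching it to $s$ and $t$ costs $d_{sv}+d_{v't}$, and the only generic bound on this is $O(OPT)$ per promotion; with up to $k/b$ promotions this is an additive $\Theta\bigl(\frac{k}{b}\bigr)\cdot OPT$, which exceeds the claimed $O(b\log n)\cdot OPT_{LP}$ whenever $k \gg b^2\log n$. The proposed charging to ``LP mass crossing the cycle's cut'' does not repair this: constraint (\ref{lp:katspp-set}) only guarantees one unit of flow entering the cycle, and the cost of those crossing arcs bounds the last hop into the cycle, not the distance from $s$ to a cycle node or from a cycle node to $t$; distinct cycles' cuts can also share the same LP mass. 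Moreover, nothing in your scheme ties the parameter $b$ to the \emph{number} of surviving cycles: running $b$ cycle covers per round may reduce the number of rounds, but gives no argument that only $k/b$ cycles persist and need promotion, so the path-count guarantee $k'\le k+\frac{k}{b}$ is unsupported. A secondary concern is the recursion itself: re-solving the LP on an instance obtained by \emph{contracting} paths/cycles is not a node-induced sub-metric, and it is not established that the LP value does not increase (the paper needs a splitting-off argument, Lemma \ref{lem:lp_restrict}, even for induced subsets); nor is the claim that an FGM-style contraction baseline yields $2k$ paths at $b=1$ justified.

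The paper avoids promotions altogether. It runs $L=(b+1)\lfloor\log_2 n\rfloor$ iterations, each computing a single minimum-cost $k$-path/cycle cover on the surviving node set; the cycles are stored as circulations $C$ and all of their nodes except one carefully chosen representative (Step \ref{step:chooserep} of Algorithm \ref{alg:katspp}) are discarded. The invariants of Lemmas \ref{lem:bound} and \ref{lem:support} guarantee every surviving node supports $L-\gamma\ge b\lfloor\log_2 n\rfloor$ units of the accumulated acyclic flow $F$, so scaling $F$ by $\frac{1}{L-\gamma}$ gives a fractional point of $\mathcal P\bigl(\frac{kL}{L-\gamma}\bigr)$ with flow value at most $k+\frac{k}{b}$ and unit throughput at each node; integrality of the flow polytope plus the sawtooth rounding of Lemma \ref{lem:sawtooth} then yield $k'\in[k,k+\frac{k}{b}]$ integral paths at no extra cost (Corollary \ref{cor:findpaths}), and the discarded nodes are grafted back via the stored circulations using the Eulerian argument of Lemma \ref{lem:circsupport}. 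In other words, the extra $\frac{k}{b}$ paths come from rounding the value of a scaled flow, paid for by running more iterations as $b$ grows, not from cutting cycles open and buying new connections to $s$ and $t$; your proposal needs a replacement for that mechanism before the cost bound can go through.
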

This is an $O(k \log n)$-approximation using precisely $k$ paths when $b=k+1$ and an $O(\log n)$-approximation using at most $2k$ paths when $b=1$.

The algorithm is also easy to implement with the most complicated subroutine being that of finding a minimum weight perfect matching in a bipartite graph. Its running time can easily be seen to be $O(M(n+k-2) b \log n)$ where $M(N)$ is the time it takes to compute such a matching in a complete bipartite graph with $N$ nodes on each side.

We then proceed to study variants of $k$-ATSPP that vary how the start and/or end locations are specified. Examples are when the start locations are not fixed or when we have a set of $k$ start nodes $S$ and a set of $k$ end nodes $T$ and the start and end locations of the paths should establish a bijection between $S$ and $T$. We extend our approximation algorithm for $k$-ATSPP to these variants.

Finally, we study General $k$-ATSPP.  Our first result is an $O(\log n)$ for General 2-ATSPP. We also have a 3-approximation for General $k$-ATSPP in symmetric metrics and an $O(\log n)$-approximation for General $k$-ATSPP when $s_i = t_i$ for all $i$. However, we have the following hardness result for General $k$-ATSPP with no further restrictions.
\begin{theorem} \label{thm:general_hard}
It is NP-hard to distinguish between instances of General $k$-ATSPP whose optimum solution has cost 0 and instances whose optimum solution has cost at least 1.
\end{theorem}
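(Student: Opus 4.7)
The plan is to reduce from 3-SAT to the gap-$0$-versus-$1$ decision version of General $k$-ATSPP. Given a 3-SAT formula $\phi$ with $n$ variables and $m$ clauses, I would construct a directed graph $D$ built out of variable gadgets and clause gadgets and then define an asymmetric metric $G$ on $V(G) = V(D)$ by setting $d_{uv} = 0$ when $v$ is reachable from $u$ in $D$ and $d_{uv} = 1$ otherwise. The directed triangle inequality is immediate from transitivity of reachability: if $d_{uw} + d_{wv} = 0$ then $u$ reaches $w$ and $w$ reaches $v$ in $D$, so $u$ reaches $v$ and $d_{uv} = 0$. The $k = n + m$ source-sink pairs would be the endpoint pairs of the individual gadgets. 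Any solution of cost less than $1$ must use only cost-$0$ arcs, so it corresponds to a collection of simple paths in the reachability closure of $D$, one per pair, whose union covers $V(D)$; if no such collection exists then at least one cost-$1$ arc must appear and the optimum is at least $1$.

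The variable gadget for $x_i$ would use a pair $(a_i, b_i)$ and internal nodes $T_i, F_i$ with arcs $a_i \to T_i \to b_i$ and $a_i \to F_i \to b_i$, further chained through the literal-occurrence nodes of clauses in which $x_i$ appears positively (on the $T_i$-side) or negatively (on the $F_i$-side), so that any cost-$0$ $a_i$-$b_i$ path encodes a truth value for $x_i$ and covers exactly the literal-occurrences it makes true. The clause gadget for $C_j$ would have a pair $(c_j, d_j)$ and three literal-occurrence nodes $p_{j,1}, p_{j,2}, p_{j,3}$ wired so that every cost-$0$ $c_j$-$d_j$ simple path covers exactly two of the three $p_{j,k}$'s, skipping one. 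A short combinatorial check then shows that every node of $D$ is covered by the $k$ specified paths if and only if each clause can afford to skip a literal that is true under the assignment read off the variable gadgets, that is, if and only if $\phi$ is satisfiable.

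The main obstacle will be the engineering of the clause gadget, since the reachability closure can introduce ``shortcut'' cost-$0$ arcs that threaten to break the ``exactly two literals per clause'' property. A naive clause gadget with shared intermediate nodes admits hybrid cost-$0$ paths in the closure that cover only one or all three of the $p_{j,k}$'s, either of which would invalidate the correspondence with 3-SAT. To handle this I would make each of the three ``skip-one'' tracks use its own private intermediate nodes, arrange $D$ as a layered DAG so that reachability in the closure is forced to respect the layers, and verify that the only cost-$0$ $c_j$-$d_j$ simple paths, even after taking the closure, are the three intended two-literal tracks. Analogous care is needed at the interface between each variable's chain and the literal-occurrence nodes so that no variable path can accidentally cover more (or fewer) occurrences than dictated by its truth-assignment choice, ensuring that in the unsatisfiable case some $p_{j,k}$ is genuinely uncoverable at cost $0$.
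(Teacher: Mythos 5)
Your proposal is a plan rather than a proof: the decisive steps (the exact gadgets and the verification that the reachability-closure metric does not create unintended zero-cost paths) are explicitly deferred, and the key property you ask of the clause gadget is in fact unachievable in this framework. For the ``satisfiable $\Rightarrow$ cost $0$'' direction you need, for each of the three pairs of occurrence nodes of clause $C_j$, a zero-cost $c_j$--$d_j$ path covering exactly that pair. Any zero-cost path visiting $p_{j,a}$ and $p_{j,b}$ certifies that one of them reaches the other in your digraph $D$, so all three pairs of $\{p_{j,1},p_{j,2},p_{j,3}\}$ are comparable under reachability. Orienting each pair gives a tournament on three nodes, which always contains a directed Hamiltonian path $p_x \to p_y \to p_z$; since $c_j$ reaches all three and all three reach $d_j$, the simple path $c_j, p_x, p_y, p_z, d_j$ has cost $0$ and covers all three occurrence nodes. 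Hence ``every cost-$0$ $c_j$--$d_j$ simple path covers exactly two of the three'' cannot hold, no matter how you privatize intermediates or layer $D$, and with it the intended inference ``the skipped literal must be made true by its variable path'' evaporates, so the reverse direction collapses. Related leakages are equally unresolved: occurrence nodes are shared between variable chains and clause tracks, so in the closure a single variable path can wander through clause arcs into other variables' chains and cover positive and negative occurrences of the same variable, and the skipped clause tracks' private intermediates still need to be covered by someone. Even the easy direction fails as literally written: both $T_i$ and $F_i$ must be covered, only the path starting at $a_i$ can reach them, and no zero-cost path can visit both.

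This is exactly the difficulty the paper's proof is engineered to avoid. It reduces from Tripartite Triangle Packing (via 3-Dimensional Matching) rather than 3-SAT: the instance is a four-layer DAG on $4n$ nodes with $k=n$ source--sink pairs $(u\in X_1,\,u\in X_4)$, zero-cost arcs only going forward between consecutive layers. Because every zero-cost step advances a layer, each of the $n$ paths can cover at most two internal nodes, and there are exactly $2n$ internal nodes, so a zero-cost solution has no slack: each path must visit exactly one node of $X_2$ and one of $X_3$ along graph edges, i.e.\ trace a triangle, and the triangles partition the vertices. That tight counting is what neutralizes the shortcut arcs introduced by the metric closure; your SAT-gadget approach would need an analogous mechanism (or a clause constraint not of the ``cover two of three shared nodes'' form), and currently it has none.
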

This implies the problem cannot be efficiently approximated within any bounded ratio unless P = NP.  While the reduction uses $k = n/4$, modifications can be made to prove hardness results (under stronger assumptions) for values of $k$ being as small as polylogarithmic in $n$.

To summarize, Section \ref{sec:katspp} presents the algorithm for $k$-ATSPP, proves Theorem \ref{thm:katspp}, and discusses some variations of $k$-ATSPP on how the start and/or end locations are specified. In Section \ref{sec:general_katspp} we demonstrate an $O(\log n)$-approximation for General 2-ATSPP, discuss approximation algorithms for other restrictions of General $k$-ATSPP, and prove Theorem \ref{thm:general_hard}. Section \ref{sec:conc} then concludes this paper by identifying some directions for future work and mentioning some basic results for the alternative goal of minimizing the cost of the most expensive path in either $k$-ATSPP or General $k$-ATSPP.


\section{A Bicriteria Approximation for $k$-ATSPP} \label{sec:katspp}

In this section, we will develop a bicriteria approximation algorithm that finds approximately $k$ paths from $s$ to $t$ in an asymmetric metric $G = (V,A)$ whose total cost is within some bounded ratio of the optimum value of LP relaxation (\ref{lp:katspp}). The algorithm is parameterized by a positive integer $b$; different bicriteria approximation guarantees result from different choices of $b$.

\subsection{Preliminaries}

If $X$ is a flow between two nodes or a circulation then we let $X_{uv}$ denote the value that $X$ assignes to arc $uv \in A$. For $S \subseteq V$ we let $X(\delta^+(S)) = \sum_{u \in S, v \in V-S} X_{uv}$ and $X(\delta^-(S)) = \sum_{v \in V-S, u \in S} X_{vu}$. For brevity, we let $X(\delta^+(v)) := X(\delta^+(\{v\}))$ and $X(\delta^-(v)) := X(\delta^-(\{v\}))$ for $v \in V$. We say $X$ is integral if $X_{uv}$ is an integer for each arc $uv \in A$. The cost of $X$ is $\sum_{uv \in A} d_{uv} \cdot X_{uv}$. All flows and circulations $X$ in this paper will have $X_{uv} \geq 0$ for each arc $uv \in A$.

Our starting point will be to use structures similar to cycle covers from \cite{frieze:galbiati:maffioli} and path/cycle covers from \cite{lam:newman} and \cite{friggstad:salavatipour:svitkina}. 

\begin{definition}
A $k$-path/cycle cover from $s$ to $t$ is an integral flow $F$ such that $F(\delta^+(v)) = F(\delta^-(v)) = 1$
for each $v \in V-\{s,t\}$, $F(\delta^+(s)) = F(\delta^-(t)) = k$, and $F(\delta^-(s)) = F(\delta^+(t)) = 0$.
\end{definition}
Note that in a $k$-path/cycle cover $F$ the flow $F_{uv}$ across any arc $uv \in A-\{st\}$ is either 0 or 1 and $F_{st} \leq k$. If we regard $F$ as a multiset of arcs, then $F$ may be decomposed into $k$ paths from $s$ to $t$ and a collection of cycles where every $v \in V-\{s,t\}$ lies on exacly one path or exactly one cycle. We can efficiently find a minimum-cost $k$-path/cycle cover using a standard reduction to minimum weight perfect matching in a bipartite graph with $n+k-2$ nodes on each side.

LP (\ref{lp:katspp}) is the LP relaxation for $k$-ATSPP we consider. It is similar to the LP relaxation for ATSPP considered in \cite{nagarajan:ravi:latency} and \cite{friggstad:salavatipour:svitkina}.
\begin{align}
{\rm minimize:~} & \hspace{20mm} \sum_{e \in A} d_{uv} x_{uv} & & & \label{lp:katspp} \\
{\rm subject~to:~} & \hspace{10.5mm} x(\delta^+(v)) = x(\delta^-(v) = 1 & \forall v \in V-\{s,t\} \notag \\
 & \hspace{10mm} x(\delta^+(s)) = x(\delta^-(t)) = k & \notag \\
 & \hspace{10mm} x(\delta^-(s)) = x(\delta^+(t)) = 0 & \notag \\
 & \hspace{28.25mm} x(\delta^+(S)) \geq 1 & \forall \{s\} \subseteq S \subsetneq V \label{lp:katspp-set} \\
 & \hspace{37.75mm} x_{uv} \geq 0 & \forall~ uv \in A \notag
\end{align}

We will break the presentation of the algorithm into two parts. The first is a loop that runs for $\Theta(b \log n)$ iterations. Each iteration will find the cheapest $k$-path/cycle cover on the remaining nodes and discards some nodes from the cycles or, more generally, circulations formed by taking the union of the current $k$-path/cycle cover and paths from previous iterations. These discarded nodes can be added to the final solution by using the circulations to ``graft'' them into the paths. It is similar to the main loop in the algorithm for ATSPP in \cite{friggstad:salavatipour:svitkina}.

After this first phase we will have $\Theta(b k \log n)$ paths of cost from $s$ to $t$ whose union is acyclic and covers all remaining nodes. However, our goal is to use at most $k + \frac{k}{b}$ paths. The second part of the algorithm will assemble only $k + \frac{k}{b}$ paths that cover all remaining nodes using arcs from the $\Theta(b k \log n)$ paths. This will be possible because we will carefully select which nodes to discard in each iteration so that each remaining node lies in almost a $\frac{1}{k}$-fraction of the $\Theta(b k \log n)$ paths after the main loop. So, if we regard our $\Theta(bk\log n)$ paths as a flow then each node supports almost a $\frac{1}{k}$-fraction of this flow. If we scale the flow by $\Theta(b \log n)$ then we have a flow sending $\Theta(k)$ units from $s$ to $t$ where every other remaining node supports $\Omega(1)$ units of this flow.

We will show how to round this flow to obtain an acyclic integral flow of roughly the same cost that sends between $k$ and $k + \frac{k}{b}$ units of flow from $s$ to $t$ so that each remaining node supports exactly 1 unit of this flow. Then a path decomposition of this acyclic integral flow yields the required paths. Finally, we graft the nodes that were discarded in the first phase to these paths using the circulations that were removed in the first phase.

\subsection{The Algorithm}
Let $b \geq 1$ be an integer, this is the $b$ in the statement of Theorem \ref{thm:katspp}. For notational convenience, we will let $L$ be $(b+1)\lfloor \log_2 n \rfloor$ for the remainder of this section. Algorithm \ref{alg:katspp} is the $k$-ATSPP bicriteria approximation.

\begin{algorithm}[ht]
  \caption{~~The Bicriteria Approximation for $k$-ATSPP.} \label{alg:katspp} 
\begin{algorithmic}[1] 
\State Let $W \leftarrow V$ \Comment $W$ is the set of nodes that have not been discarded yet
\State Let $l_v = 0, \forall v \in V$ \Comment Used to decide which nodes to discard from a circulation
\State Let $F_{uv} \leftarrow 0, \forall u,v \in V$ \Comment $F$ will be an acyclic $s-t$ flow using only nodes in $W$
\State Let $C_{uv} \leftarrow 0, \forall uv \in V$ \Comment $C$ will be a circulation on the nodes in $V-W$
\State $L \leftarrow (b+1) \lfloor \log_2 n \rfloor$
\For   {$L$ iterations}
\State Find a minimum-cost $k$-path/cycle cover $F'$ on $W$ \label{step:kpathcycle}
\State $F \leftarrow F + F'$ \label{step:adding}
\State Subtract a circulation $H$ from $F$ so $F$ is acyclic again \label{step:circulation}
\For   {each connected component $A$ in the support of $H$} \label{step:innerloop}
\State Let $v_A \leftarrow \arg\min_{u \in A} l_u + H(\delta^+_A(u))$ \Comment breaking ties arbitrarily \label{step:chooserep}
\For   {each node $w \in A-\{v_A\}$} 
\State Shortcut the flow in $F$ over $w$ so $F(\delta^+(w)) = 0$ \Comment $F$ remains acyclic \label{step:shortcut}
\EndFor
\State $W \leftarrow W - (A - \{v_A\})$ \label{step:remove} \Comment The nodes in $A-\{v_A\}$ will be reintroduced in step \ref{step:circ}
\State $l_{v_A} \leftarrow l_{v_A} + H(\delta^+_A(v_A))$
\EndFor
\State $C \leftarrow C + H$ \label{step:addcirc}
\EndFor 
\State Use Corollary \ref{cor:findpaths} to find $k' \in [k, k+\frac{k}{b}]$ $s-t$ paths whose union covers all nodes in $W$ of total cost at most the cost of $F$. Let $P$ be the $s-t$ flow of value $k'$ defined by these paths. \label{step:wflow}
\State Let $B$ be the simple $t-s$ flow with $B_{ts} = k'$ and $B_{uv} = 0$ for every other arc.
\State Let $C' \leftarrow P+C+B$ \label{step:circ}
\State Find an Eulerian circuit $K$ using each arc $uv$ eactly $C'_{uv}$ times. \Comment See Lemma \ref{lem:circsupport}
\State Shortcut $K$ so it visits each node in $V-\{s,t\}$ exactly once.
\State Delete the $k'$ edges $ts$ from $K$ to obtain $k'$ paths $P_1, \ldots, P_{k'}$. \label{step:kpath}
\State Return $P_1, \ldots, P_{k'}$. \label{step:return}
\end{algorithmic}
\end{algorithm}

The proof of the following lemma is simple and is found in Appendix \ref{app:circ}.
\begin{lemma} \label{lem:circsupport}
In Step \ref{step:circ}, $C'$ is an integral circulation whose support is strongly connected in $G(V,A)$.
\end{lemma}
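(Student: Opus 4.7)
The plan is to establish four properties of $C'$: integrality, flow conservation (so $C'$ is a circulation), every node of $V$ incident to the support, and strong connectivity of the support. The first two are immediate from the construction. Each minimum-cost $k$-path/cycle cover $F'$ produced in Step \ref{step:kpathcycle} is integral, so the flow $F+F'$ is integral and so is the circulation $H$ extracted in Step \ref{step:circulation}; hence $C$, being a sum of such $H$, is integral. The flow $P$ is a union of $k'$ $s$--$t$ paths so it is integral, and $B$ is integral by definition, making $C'=P+C+B$ integral. Each $H$ is itself a circulation, while $P+B$ satisfies flow conservation because $P$ sends $k'$ units from $s$ to $t$ and $B$ returns $k'$ units along $ts$; thus $C'$ is a circulation.

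To show every node $v\in V$ lies in the support of $C'$, I would split on whether $v$ is still in $W$ when the main loop terminates. If so, then $v$ lies on one of the $k'$ $s$--$t$ paths defining $P$ (or $v\in\{s,t\}$, which is an endpoint of every such path). Otherwise $v$ was removed at Step \ref{step:remove} in some iteration, meaning $v$ belonged to a connected component of the support of the corresponding $H$, and that $H$ was added to $C$ at Step \ref{step:addcirc}. Either way $v$ has incident arcs in $\mathrm{supp}(C')$.

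The nontrivial part is strong connectivity. Let $W^{\ast}$ denote the set of nodes remaining in $W$ after the loop. Every node of $W^{\ast}$ lies on an $s$--$t$ path in $P$, and $B$ contributes the arc $ts$, so $W^{\ast}\cup\{s,t\}$ is contained in one strongly connected component of $\mathrm{supp}(P+B)\subseteq\mathrm{supp}(C')$. For a discarded node $w$, I would construct a chain $w=v_0,v_1,\ldots,v_\ell$ where $v_{j+1}$ is the representative $v_A$ chosen at Step \ref{step:chooserep} for the connected component containing $v_j$ at the iteration when $v_j$ was removed from $W$. Each $v_{j+1}$ survives the iteration in which $v_j$ is removed, so its own removal iteration (if any) is strictly later than $v_j$'s; with only $L$ iterations, the chain must terminate at some $v_\ell\in W^{\ast}$. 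The underlying fact I would invoke is that any nonnegative integer circulation decomposes into simple directed cycles, so every weakly connected component of its support is in fact strongly connected. Applied to each relevant $H$, this places consecutive $v_j,v_{j+1}$ in a common strongly connected component of $\mathrm{supp}(H)\subseteq\mathrm{supp}(C)$, and concatenating these segments with the backbone in $\mathrm{supp}(P+B)$ shows $w$ lies in the same strongly connected component as $s$ and $t$.

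The main obstacle is precisely this chain-of-representatives argument: one must observe that the representative rule of Step \ref{step:chooserep} guarantees strict increase of removal iterations along the chain, which is what forces termination in $W^{\ast}$. Once that monotonicity and the cycle-decomposition fact about circulations are in hand, the rest of the proof reduces to a routine composition of strongly connected pieces.
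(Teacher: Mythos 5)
Your proof is correct and follows essentially the same route as the paper's: the paper argues by backward induction on the iterations that every node has a path to $t$ in the support of $C'$ (a removed node first reaches its component's representative through the circulation $H$, then inductively reaches $t$), which is exactly your chain-of-representatives argument in inductive form. The only cosmetic difference is that you state explicitly the fact that the weakly connected components of a circulation's support are strongly connected, which the paper uses implicitly when it reduces strong connectivity to exhibiting $v$--$t$ paths.
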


We assume, for now, that Step \ref{step:wflow} works as stated. If so, we can prove the following combinatorial analog of Theorem \ref{thm:katspp} that compares the cost of the solution to the optimum $k$-ATSPP cost $OPT$ solution rather than the value of LP (\ref{lp:katspp}).
\begin{theorem} \label{thm:comb_opt}
Each node $v \in V-\{s,t\}$ lies on exactly one of the $s-t$ paths $P_1, \ldots, P_{k'}$ returned by Algorithm \ref{alg:katspp} and the total cost of these paths is at most $L \cdot OPT$.
\end{theorem}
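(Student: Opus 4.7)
The plan is to establish the two assertions separately: that every $v \in V-\{s,t\}$ appears on exactly one returned path, and that the total cost is at most $L \cdot OPT$. For coverage, the key reduction is to show that the circulation $C'$ built in Step~\ref{step:circ} carries positive flow through every $v \in V-\{s,t\}$. Once this is in hand, Lemma~\ref{lem:circsupport} yields an Eulerian circuit $K$ through every such $v$, the subsequent shortcutting normalises this to exactly one visit, and deleting the $k'$ copies of the arc $ts$ cuts $K$ into $k'$ simple $s$-$t$ paths that cover each intermediate node exactly once.

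To verify positivity, I would partition $V-\{s,t\}$ into the nodes still in $W$ at the end of the main loop and those that were removed. A surviving node is covered by one of the $s$-$t$ paths produced in Step~\ref{step:wflow}, so it is supported by $P$ and hence by $C'$. A removed node $w$ must have been discarded in some iteration $i$ as a non-representative of a component $A$ of the support of $H_i$, so $H_i(\delta^+(w)) \geq 1$; since $H_i$ is added to $C$ and $w$ participates in no later iteration (its flow in $F$ is shortcut away and $w \notin W$ afterwards, so no later $F'_j$ touches it), we get $C(\delta^+(w)) \geq 1$. One sanity check I would make explicit is that $s$ and $t$ themselves can never be removed: since at every stage $F+F'$ has zero in-flow at $s$ and zero out-flow at $t$, neither can lie on a cycle, so neither appears in any component $A$ of $H$.

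For the cost bound, I would track the potential $\Phi := \mathrm{cost}(F) + \mathrm{cost}(C)$ across iterations of the main loop. In iteration $i$, adding $F'_i$ to $F$ raises $\Phi$ by $\mathrm{cost}(F'_i)$; simultaneously subtracting $H_i$ from $F$ and adding it to $C$ leaves $\Phi$ unchanged; and the final shortcutting of $F$ can only decrease $\mathrm{cost}(F)$ by the directed triangle inequality. The key inequality $\mathrm{cost}(F'_i) \leq OPT$ holds because an optimum $k$-ATSPP solution, shortcut over the nodes of $V-W$, becomes $k$ paths from $s$ to $t$ supported on $W$, which is a feasible $k$-path/cycle cover on $W$ of cost at most $OPT$; since $F'_i$ minimises cost among such covers, the bound follows. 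Summing over the $L$ iterations gives $\mathrm{cost}(F) + \mathrm{cost}(C) \leq L \cdot OPT$.

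Finally, to transfer this to the returned paths: the Eulerian circuit has cost $\mathrm{cost}(C') = \mathrm{cost}(P) + \mathrm{cost}(C) + k' \cdot d_{ts}$, shortcutting only decreases cost, and deleting the $k'$ copies of $ts$ subtracts exactly $k' \cdot d_{ts}$, so the returned paths cost at most $\mathrm{cost}(P) + \mathrm{cost}(C)$. Combining with $\mathrm{cost}(P) \leq \mathrm{cost}(F)$ from Step~\ref{step:wflow} gives the desired $L \cdot OPT$ bound. The most delicate point I foresee is the bookkeeping around the in-loop shortcutting: one must confirm that replacing an arc pair $uw, wv$ by the single arc $uv$ cannot create a new directed cycle in the support of $F$, which follows because any such cycle would already have existed through $w$ in the previously acyclic $F$.
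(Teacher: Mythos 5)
Your proposal is correct and follows essentially the same route as the paper: coverage via the circulation $C'$, Lemma \ref{lem:circsupport}, and the Eulerian circuit with the $ts$ arcs removed, and the cost bound via the per-iteration inequality $\mathrm{cost}(F') \leq OPT$ (shortcutting an optimum solution onto $W$) summed over the $L$ iterations, then transferred through $C'$ and the deletion of the $k'$ arcs $ts$. Your explicit case analysis showing every node of $V-\{s,t\}$ carries positive flow in $C'$ is a slightly more careful restatement of what the paper delegates to the proof of Lemma \ref{lem:circsupport}, not a different argument.
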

\begin{proof}
Since $C'$ is a circulation whose support is strongly connected in $G(V,A)$, then every node in $V$ is visited by the Eulerian circuit. Since the shortcutting did not bypass around either $s$ or $t$, then the $ts$ arc still appear exactly $k'$ times in $K$ and there are no $vs$ or $tv$ arcs in $K$ for any $v \in V-\{s,t\}$. So, after removing the $k'$ occurences of the $t-s$ arc from $K$, we have a collection of precisely $k'$ paths from $s$ to $t$ whose union visits all nodes.

All that is left is to bound the final cost by $L \cdot OPT$. We prove, by induction on $i$, that the cost of $F+C$ after iteration $i$ is at most $i \cdot OPT$. For $i = 0$ (before the first iteration) this is clear and we now assume that $i > 0$ and that the cost of $F+C$ just before the $i$'th iteration is at most $(i-1) \cdot OPT$.

Let $W_i$ be the subset of nodes $W$ in the $i$'th iteration. We can obtain a feasible $k$-path/cycle cover on $W_i$ of cost at most $OPT$ by shortcutting an optimum $k$-ATSPP solution on $V$ around nodes in $V-W_i$. Thus, the minimum cost $k$-path/cycle cover $F'$ on $W_i$ has cost at most $OPT$. After adding $F'$ to $F$ we have that the cost of $F+C$ is, by induction, at most $i \cdot OPT$. The rest of the body of the loop simply moves flow between $F$ and $C$ and shortcuts some flow so the cost of $F+C$ is still bound by $i \cdot OPT$ at the end of the $i$'th iteration.

The cost of the circulation $C'$ in Step \ref{step:circ} is then at most $L \cdot OPT$ plus the cost of $B$. Shortcutting past nodes in the Eulerian circuit $K$ yields an circuit whose total cost is still at most $L \cdot OPT$ plus the cost of $B$. The paths $P_1, \ldots, P_{k'}$ are formed by removing the $k'$ arcs from $t$ to $s$ which have cost exactly the cost of $B$. Thus, the cost of the returned paths is at most $L \cdot OPT$.
\end{proof}

\subsection{Finding the Flow $P$ in Step \ref{step:wflow}}

Consider the acyclic integral flow $F$ after the main loop terminates. It is possible to argue that our choice of $v_A$ in Step \ref{step:chooserep} implies each $v \in W$ has $F(\delta^+(v)) > 0$ so a path decomposition of $F$ yields a collection of paths whose union covers all nodes. However, the number of paths is $kL$ which is much larger than our desired value $k + \frac{k}{b}$. The fact that we can find fewer paths is essentially due to the fact that every $v \in W-\{s,t\}$ supports a lot of flow in $F$.

The main object of concern in this step is the following polytope $\mathcal P(D)$ where $D \in \mathbb R$. In $\mathcal P(D)$, we have a variable $z_{uv}$ for every arc $uv$ in the subgraph induced by $W$. The full description of $\mathcal P(D)$ is:
\begin{align}
  ~~~z(\delta^+(w)) = z(\delta^-(w)) & = ~~ 1 & \forall w \in W - \{s,t\} \label{lp:pd} \\
 z(\delta^+(s)) = z(\delta^-(t)) & = ~~ D & \label{lp:pd-d} \\
 z(\delta^-(s)) = z(\delta^+(t)) & = ~~ 0 & \label{lp:pd-zero} \\
 0 \leq z_{uv} & \leq ~~ F_{uv} & \forall {\rm ~ordered~pairs~} u,v \in V \label{lp:pd-bound}
\end{align}

Since the support of $F$ is acyclic and the support of $z$ is required to be a subset of the support of $F$, then any integral point $z \in \mathcal P(D)$ corresponds to a flow $P$ of the form required in Step \ref{step:wflow} with $k' = D$. Notice that Constraints (\ref{lp:pd-bound}) imply that the cost of $z$ would be at most the cost of $F$. Thus, our goal is to find a value $D \in [k,k+\frac{k}{b}]$ for which $\mathcal P(D)$ has an integer point. The proof of the following property is standard ({\em eg.} \cite{schrijver}).

\begin{lemma}
Every basic point in polytope $\mathcal P(D)$ is integral when $D$ is an integer.
\end{lemma}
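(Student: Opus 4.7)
The plan is to recognize $\mathcal{P}(D)$ as a classical flow polytope and invoke the total unimodularity of its constraint matrix, which is the standard textbook argument referenced by the author.

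First I would rewrite the system (\ref{lp:pd})--(\ref{lp:pd-bound}) in matrix form $Mz = b$, $0 \leq z \leq F$. The equations (\ref{lp:pd})--(\ref{lp:pd-zero}) are exactly the flow-conservation constraints at each node, so $M$ is (a submatrix of) the node--arc incidence matrix of the directed graph induced by $W$: each column, corresponding to an arc $uv$, has exactly one $+1$ entry (at row $u$) and exactly one $-1$ entry (at row $v$). This is a textbook example of a totally unimodular matrix. Adjoining the identity rows coming from the upper bounds $z_{uv} \leq F_{uv}$ preserves total unimodularity.

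Second I would verify that the right-hand side vector is integral. The entries coming from flow conservation are $1$, $D$, or $0$, all integers by the hypothesis that $D \in \mathbb{Z}$. The upper-bound entries are the capacities $F_{uv}$, which I would argue are integers by a short induction over the main loop: $F$ is initialized to $0$; each $k$-path/cycle cover $F'$ added in Step \ref{step:adding} is integral by definition; the circulation $H$ subtracted in Step \ref{step:circulation} to restore acyclicity is integral because it is obtained from cycles of the integral flow $F+F'$; and the shortcutting in Step \ref{step:shortcut} rewrites integer units along two-arc paths, preserving integrality.

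Finally I would invoke the Hoffman--Kruskal theorem: if $M$ is totally unimodular and the right-hand side is integral, then every vertex of $\{z : Mz = b,\ 0 \leq z \leq F\}$ is integral, and vertices coincide with basic feasible solutions. I expect no real obstacle here; the argument is essentially bookkeeping, and the only thing worth stating carefully is the integrality of $F$, since the total unimodularity half of the argument is quoted directly from Schrijver.
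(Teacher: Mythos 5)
Your overall strategy (total unimodularity plus the Hoffman--Kruskal theorem, with a check that the right-hand side, including the capacities $F_{uv}$, is integral) is exactly the standard argument the paper is gesturing at by citing Schrijver, and your induction showing $F$ stays integral through the main loop is a worthwhile piece of bookkeeping that the paper leaves implicit. However, there is a concrete misstep in your matrix-form rewriting. Constraints (\ref{lp:pd})--(\ref{lp:pd-zero}) are \emph{not} flow-conservation constraints: they pin the out-degree and the in-degree of each node \emph{separately} ($z(\delta^+(w))=1$ \emph{and} $z(\delta^-(w))=1$ for internal $w$, and similarly at $s$ and $t$). If you replace them by conservation equations $z(\delta^+(w))-z(\delta^-(w))=0$ with supply $D$ at $s$ and demand $D$ at $t$, you obtain a strictly larger polytope (for instance, an internal node with two units in and two units out is conservative but violates (\ref{lp:pd})), so proving integrality of that system via the signed node--arc incidence matrix does not, as written, prove the lemma about $\mathcal P(D)$.

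The fix is one line and keeps your argument intact. The true constraint matrix of $\mathcal P(D)$ has one row for the out-degree and one row for the in-degree of each node, and each column (arc $uv$) has exactly two $+1$ entries, one in the out-row of $u$ and one in the in-row of $v$. This is the incidence matrix of the bipartite graph obtained by splitting each node of $W$ into an ``out'' copy and an ``in'' copy --- precisely the graph $W_L \cup W_R$ the paper later builds in the proof of Lemma \ref{lem:sawtooth} --- and incidence matrices of bipartite graphs are totally unimodular (each column has two ones lying in different sides of the row bipartition). With that identification, your remaining steps are correct: adjoining the identity rows for the bounds (\ref{lp:pd-bound}) preserves total unimodularity, the right-hand side is integral because $D \in \mathbb{Z}$ and $F$ is integral by your induction, and Hoffman--Kruskal gives integrality of every vertex, i.e., of every basic point of $\mathcal P(D)$.
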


So, to prove $\mathcal P(D)$ has an integer point for a given integer $D$ it suffices to prove that $\mathcal P(D)$ contains {\em any} point. That is, for $D \in \mathbb Z$ if there is some point $z \in \mathcal P(D)$ with, perhaps, rational coordinates, then there is certainly a point $z'$ with integer coordinates since $\mathcal P(D)$ is integral.

The following lemmas are proven in essentially the same way as Lemmas 2.3 and 2.5 in the extended abstract of \cite{friggstad:salavatipour:svitkina}, so we omit their proofs.

\begin{lemma} \label{lem:bound}
Throughout the course of the algorithm, $l_v \leq \lfloor \log_2 n\rfloor$ for every $v \in W$.
\end{lemma}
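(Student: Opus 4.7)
The plan is to maintain the invariant $2^{l_v} \leq |C_v|$ for every $v \in W$ throughout the algorithm, where $C_v \subseteq V$ is a ``cluster'' of original nodes associated with the current representative $v$. Initialize $C_v := \{v\}$ for every $v \in V$, and whenever the inner loop processes a connected component $A$ of $\mathrm{supp}(H)$ and selects $v_A$, set $C_{v_A}^{\text{new}} := \bigcup_{u \in A} C_u$ (the clusters of nodes $u \in A \setminus \{v_A\}$ are then forgotten as those nodes leave $W$). Since the family $\{C_v : v \in W\}$ always partitions a subset of $V$, we have $|C_v| \leq n$, so the invariant immediately implies $l_v \leq \log_2 n$, and integrality of $l_v$ upgrades this to $l_v \leq \lfloor \log_2 n \rfloor$.

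I would prove the invariant by induction on the iterations of the main loop. The base case is trivial since $l_v = 0$ and $|C_v| = 1$ initially. For the inductive step, fix an iteration and a processed component $A$, and write $l_u, |C_u|$ for the pre-update values. By the rule for $v_A$,
\[ l_{v_A}^{\text{new}} \;=\; l_{v_A} + H(\delta^+_A(v_A)) \;=\; \min_{u \in A}\bigl(l_u + H(\delta^+_A(u))\bigr), \]
while $|C_{v_A}^{\text{new}}| = \sum_{u \in A} |C_u| \geq \sum_{u \in A} 2^{l_u}$ by the inductive hypothesis. It therefore suffices to show $2^{l_{v_A}^{\text{new}}} \leq \sum_{u \in A} 2^{l_u}$. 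Using $l_u \geq l_{v_A}^{\text{new}} - H(\delta^+_A(u))$ from the min property, this reduces to showing $\sum_{u \in A} 2^{-H(\delta^+_A(u))} \geq 1$.

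The main obstacle lies precisely in this last inequality, whose truth depends on how the integer circulation $H$ restricted to $A$ is structured: a small example ($|A|=3$ with all three out-flows equal to $2$) gives only $3/4$, so it can fail for arbitrary circulations. The plan is to argue that we may without loss of generality choose $H$ so that each connected component of $\mathrm{supp}(H)$ is the vertex set of a single simple directed cycle. Any integer circulation in $F+F'$ decomposes into simple cycles, and Step~\ref{step:circulation} together with the inner loop can be reorganized to peel these cycles off one at a time before restoring acyclicity of $F$. Under such a choice, $H(\delta^+_A(u)) = 1$ for every $u \in A$, so $\sum_{u \in A} 2^{-1} = |A|/2 \geq 1$ since $|A| \geq 2$, and the invariant is preserved. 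Verifying that this cycle-by-cycle implementation of Step~\ref{step:circulation} does not inflate the cost incurred in the analysis of Theorem~\ref{thm:comb_opt} (it should not, since the combined flow $F+C$ is unchanged by reordering which cycles are moved from $F+F'$ into $C$) closes the argument.
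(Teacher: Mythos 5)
Your overall frame is the right one and matches the argument the paper defers to in \cite{friggstad:salavatipour:svitkina}: maintain the cluster invariant $2^{l_v}\le |C_v|$, and reduce the inductive step (using the choice of $v_A$ in Step \ref{step:chooserep}) to $\sum_{u\in A}2^{-H(\delta^+_A(u))}\ge 1$. The gap is in how you deal with that inequality. Your three-node example only shows it can fail for an \emph{arbitrary} circulation; the circulation $H$ removed in Step \ref{step:circulation} is not arbitrary. It satisfies $H\le F+F'$, where $F$ (just before Step \ref{step:adding}) has acyclic support and $F'$ is a $k$-path/cycle cover, so $F'(\delta^+(u))=F'(\delta^-(u))=1$ for every $u\in W-\{s,t\}$, and no component of $H$ contains $s$ or $t$. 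Fix a topological order of the support of the old $F$ and let $u_\ell$ (resp.\ $u_f$) be the last (resp.\ first) node of the component $A$ in this order. Every $H$-arc leaving $u_\ell$ goes to an earlier node of $A$, hence lies outside the support of the old $F$, hence is an $F'$ arc; therefore $H(\delta^+_A(u_\ell))\le F'(\delta^+(u_\ell))=1$, and symmetrically $H(\delta^-_A(u_f))\le F'(\delta^-(u_f))=1$, so both nodes carry exactly one unit of $H$. Since $|A|\ge 2$, these two nodes already give $\sum_{u\in A}2^{-H(\delta^+_A(u))}\ge \tfrac12+\tfrac12=1$ (equivalently, $2^{\min_u(l_u+H(\delta^+_A(u)))}\le 2^{\min(l_{u_f},l_{u_\ell})+1}\le 2^{l_{u_f}}+2^{l_{u_\ell}}\le\sum_{u\in A}2^{l_u}$), and your induction closes for Algorithm \ref{alg:katspp} exactly as stated. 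This structure is presumably why Step \ref{step:chooserep} minimizes $l_u+H(\delta^+_A(u))$ rather than just $l_u$.

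By contrast, your proposed repair proves a statement about a different algorithm. Declaring ``without loss of generality'' that each connected component of the support of $H$ is a single simple cycle is not a WLOG for Algorithm \ref{alg:katspp}: peeled cycles may share nodes, the inner loop in Step \ref{step:innerloop} processes connected components of the whole circulation, and a cycle-at-a-time reorganization changes which nodes are discarded and how the $l_v$ evolve. Even granting the modification, you verify only the cost bound of Theorem \ref{thm:comb_opt}; you would also have to re-check Lemma \ref{lem:support} and, more delicately, Lemma \ref{lem:circsupport}, whose induction uses the fact that each discarded node's representative survives the iteration --- with per-cycle processing a representative can itself be discarded by a later peel in the same iteration, so the argument must follow the chain of peels. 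None of this is needed once you use the acyclicity of $F$ and the degree constraints of $F'$ as above.
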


\begin{lemma} \label{lem:support}
When the main loop terminates, $F(\delta^+(v)) = L-l_v$ for each $v \in W - \{s,t\}$.
\end{lemma}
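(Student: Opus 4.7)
The plan is to establish the stronger per-iteration invariant
\[
F(\delta^+(v)) + l_v \;=\; i \qquad \text{for every } v \in W - \{s,t\} \text{ at the end of iteration } i,
\]
from which the lemma follows by taking $i = L$ at termination. The base case $i = 0$ is trivial since $F \equiv 0$ and all $l_v$ start at $0$. I will then fix an iteration $i > 0$, pick any $v \in W - \{s,t\}$ that survives the iteration, and argue that $F(\delta^+(v)) + l_v$ increases by exactly $1$.

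To do this I would walk through the effect of one iteration on $v$ line by line. Step \ref{step:kpathcycle}--\ref{step:adding}: since $F'$ is a $k$-path/cycle cover on $W$ and $v \in W - \{s,t\}$, we have $F'(\delta^+(v)) = 1$, so $F(\delta^+(v))$ goes up by $1$. Step \ref{step:circulation}: subtracting the circulation $H$ drops $F(\delta^+(v))$ by $H(\delta^+(v))$. The remaining work is showing that this drop is matched by the subsequent update to $l_v$. Here I would split into cases according to whether $v$ lies in the support of $H$. If it does not, then $H(\delta^+(v)) = 0$ and $l_v$ is untouched, so the net change is $+1$. If $v$ does lie in the support, let $A$ be its connected component; since $v$ still belongs to $W$ at the end of the iteration, the removal in Step \ref{step:remove} forces $v = v_A$. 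Step \ref{step:chooserep} through the update after Step \ref{step:remove} then adds $H(\delta^+_A(v_A))$ to $l_v$, and this exactly equals $H(\delta^+(v))$ because $H$ is a circulation whose restriction to any (strongly connected) component of its support is itself a circulation contained in that component. Again the net change in $F(\delta^+(v)) + l_v$ is $+1$.

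The one subtlety I want to handle explicitly is the shortcutting in Step \ref{step:shortcut}, which happens for nodes $w \in A - \{v_A\}$ in components $A$ that may also involve $v$. I would point out that shortcutting $F$ over a node $w \ne v$ replaces each consecutive pair of arcs $u \to w \to x$ by a single arc $u \to x$, so $F(\delta^+(u))$ is preserved for every $u \ne w$ (an arc $v \to w$, if present, is simply replaced by $v \to x$). Hence the shortcutting phase does not alter $F(\delta^+(v))$ for any $v$ that remains in $W$, and the accounting above is complete.

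I expect the main obstacle to be purely bookkeeping: keeping straight that the three potentially interacting operations in a single iteration (adding $F'$, subtracting $H$, shortcutting past discarded nodes) each affect $F(\delta^+(v))$ and $l_v$ in a way that is consistent across both the ``$v$ in the support of $H$'' and ``$v$ outside the support'' cases. Once that is organized, the inductive step is a one-line addition and the lemma follows immediately at $i = L$.
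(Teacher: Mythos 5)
Your proof is correct and is essentially the intended argument: the paper omits this proof, deferring to Lemma 2.5 of \cite{friggstad:salavatipour:svitkina}, and the argument there is exactly this per-iteration invariant $F(\delta^+(v)) + l_v = i$, with the path/cycle cover contributing $+1$, the subtracted circulation's out-flow at a surviving node transferred to $l_{v_A}$, and shortcutting leaving out-degrees of surviving nodes unchanged. Nothing further is needed.
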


We will require that all node in $W - \{s,t\}$ support the same amount of flow in $F$. The following lemma shows how to construct such a flow through simple modifications of $F$.
\begin{lemma} \label{lem:uniform}
For some $0 \leq \gamma \leq \lfloor \log_2 n \rfloor$ there is an acyclic integral flow $F'$ sending $kL$ units of flow from $s$ to $t$ where every $v \in W-\{s,t\}$ has $F'(\delta^+(v)) =  L - \gamma$. Furthermore, the cost of $F'$ is at most the cost of $F$.
\end{lemma}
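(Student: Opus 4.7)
My plan is to obtain $F'$ from $F$ by a sequence of purely local shortcut modifications that drive the flow through each interior node of $W$ down to a common value. Set $\gamma := \lfloor \log_2 n \rfloor$; by Lemma \ref{lem:bound} we have $\gamma \ge l_v$ for every $v \in W$, and by Lemma \ref{lem:support} together with flow conservation, $F(\delta^+(v)) = F(\delta^-(v)) = L - l_v \ge L - \gamma$ for each interior $v \in W$. So at each such $v$ the task is to remove exactly $\gamma - l_v \ge 0$ units of flow through the node and never to add any, which is what will let us control the cost.

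The basic operation at an interior node $v \in W$ is as follows: pair up $\gamma - l_v$ incoming arc-units $(u_1,v),\ldots,(u_m,v)$ with $\gamma - l_v$ outgoing arc-units $(v,w_1),\ldots,(v,w_m)$, which is possible because $F(\delta^-(v)) = F(\delta^+(v)) \ge \gamma - l_v$. For each $i$, decrement $F_{u_iv}$ and $F_{vw_i}$ by $1$ and increment $F_{u_iw_i}$ by $1$. Each single replacement preserves flow conservation at $u_i, v, w_i$ (at $u_i$ and $w_i$ the effect cancels, at $v$ the in- and out-flow each drop by one), preserves integrality, and by the directed triangle inequality $d_{u_iw_i}\le d_{u_iv}+d_{vw_i}$ it does not increase the total cost. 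Acyclicity is preserved as well: any topological order on the support of $F$ has $u_i \prec v \prec w_i$, so the new arc $(u_i,w_i)$ is consistent with the same ordering.

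Iterating this operation over every interior node of $W$ produces an acyclic integral flow $F'$ with $F'(\delta^+(v)) = L-\gamma$ at every $v \in W-\{s,t\}$ and total cost at most that of $F$. The net $s$-to-$t$ value stays at $kL$ because each local shortcut has zero net effect at every node, and no operation can introduce a new arc into $s$ or out of $t$ (since $F$ satisfies $F(\delta^-(s))=F(\delta^+(t))=0$, there are no arcs of the required form for a shortcut to generate them). The main point that needs care is verifying that executing the pairings node by node never gets stuck: because each shortcut rewrites only three arc-counts and leaves the in- and out-totals at every node other than its pivot unchanged, the operations at distinct nodes are essentially independent, and the pairing at each remaining node can always be completed greedily from the updated $F$.
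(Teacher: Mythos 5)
Your proposal is correct and matches the paper's own argument: both reduce each interior node's throughput to a common value $L-\gamma$ by repeatedly replacing an incoming/outgoing arc pair at that node with the direct arc, using the triangle inequality for the cost bound and the topological order of the support for acyclicity. The only (immaterial) difference is that you fix $\gamma = \lfloor \log_2 n \rfloor$ and batch the shortcuts per node, while the paper takes $\gamma = \max_{w} l_w$ and shortcuts one unit at a time; both choices satisfy the lemma.
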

\begin{proof}
Let $\gamma = \max_{w \in W-\{s,t\}} l_w$ and recall that $\gamma \leq \lfloor \log_2 n \rfloor$ by Lemma \ref{lem:bound}. While some $w \in W - \{s,t\}$ has $F(\delta^+(v)) > L - \gamma$ ({\em cf.} Lemma \ref{lem:support}), shortcut $F$ past $w$ as follows. Choose any two arcs $uw,wv$ with $F_{uw},F_{wv} > 0$. Then subtract 1 from both $F_{uw}$ and $F_{wv}$ and add 1 to $F_{uv}$. Note that $F$ remains integral and acyclic after such an operation and that the cost of $F$ does not increase by the triangle inequality. We let $F'$ be this flow $F$ once all $w \in W-\{s,t\}$ have $F(\delta^+(v)) = L - \gamma$.
\end{proof}

From now on, we will assume that $F(\delta^+(v)) = L - \gamma$ for every $v \in W-\{s,t\}$ where $\gamma$ is some integer at most $\log_2 n$. The following lemma is the first step to finding a good integer $D$ for which $\mathcal P(D) \neq \emptyset$.
The value $\frac{kL}{L-\gamma}$ may be fractional, but we will deal with that problem later.
\begin{lemma} \label{lem:polypoint}
$\mathcal P\left(\frac{kL}{L-\gamma}\right) \neq \emptyset$ for some $0 \leq \gamma \leq \lfloor \log_2 n \rfloor$.
\end{lemma}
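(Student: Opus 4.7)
The plan is to exhibit a feasible point of $\mathcal{P}\left(\frac{kL}{L-\gamma}\right)$ explicitly by scaling the flow $F$ guaranteed by Lemma~\ref{lem:uniform}. Concretely, let $\gamma$ and $F$ be as in that lemma, so $0 \leq \gamma \leq \lfloor \log_2 n\rfloor$ and $F(\delta^+(v)) = L - \gamma$ for every $v \in W - \{s,t\}$, and define
\[
z_{uv} \;:=\; \frac{F_{uv}}{L-\gamma} \qquad \text{for every ordered pair } u,v \in V.
\]
Note that $L - \gamma \geq 1$: since $b \geq 1$ we have $L = (b+1)\lfloor \log_2 n\rfloor \geq 2\lfloor \log_2 n\rfloor \geq 2\gamma$. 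Thus $z$ is well-defined and nonnegative.

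First I would verify the flow and value constraints (\ref{lp:pd})--(\ref{lp:pd-zero}). For each $w \in W - \{s,t\}$, the assumption gives $F(\delta^+(w)) = L-\gamma$, and because $F$ is a flow, conservation yields $F(\delta^-(w)) = L-\gamma$ as well; dividing by $L-\gamma$ gives $z(\delta^+(w)) = z(\delta^-(w)) = 1$. For the source and sink, each of the $L$ iterations of the main loop contributes exactly $k$ to $F(\delta^+(s))$ (Step~\ref{step:adding}), and the circulation subtraction in Step~\ref{step:circulation} and the shortcutting in Lemma~\ref{lem:uniform} both preserve the net $s$--$t$ flow value. Hence $F(\delta^+(s)) = F(\delta^-(t)) = kL$, so $z(\delta^+(s)) = z(\delta^-(t)) = kL/(L-\gamma) = D$. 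Likewise, no $k$-path/cycle cover places flow on any arc $vs$ or $tv$, and neither circulation subtraction nor shortcutting can introduce such flow (an acyclic flow with $F(\delta^-(s)) = F(\delta^+(t)) = 0$ remains so under these operations). Therefore $F(\delta^-(s)) = F(\delta^+(t)) = 0$, which gives $z(\delta^-(s)) = z(\delta^+(t)) = 0$.

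Finally I would check the capacity bound (\ref{lp:pd-bound}). Since $L-\gamma \geq 1$, we have $z_{uv} = F_{uv}/(L-\gamma) \leq F_{uv}$ for every arc, and $z_{uv} \geq 0$ follows from $F_{uv} \geq 0$. Thus $z \in \mathcal{P}\left(\frac{kL}{L-\gamma}\right)$, establishing the lemma. There is no real obstacle to this argument; the only subtle point is confirming the total $s$--$t$ flow value of $F$ is $kL$ and that $F$ carries no flow on arcs into $s$ or out of $t$, both of which follow by a straightforward induction over iterations of the main loop using the fact that the added $k$-path/cycle covers respect these properties and the subsequent operations (circulation subtraction, shortcutting in the inner loop, and the uniformizing shortcuts of Lemma~\ref{lem:uniform}) preserve them.
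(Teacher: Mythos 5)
Your proposal is correct and follows the same route as the paper: both define $z_{uv} = F_{uv}/(L-\gamma)$ using the uniformized flow from Lemma~\ref{lem:uniform} and verify feasibility in $\mathcal P\bigl(\frac{kL}{L-\gamma}\bigr)$ via $L-\gamma \geq 1$. You simply spell out the constraint checks (flow value $kL$, zero flow into $s$ and out of $t$, and the capacity bound) that the paper leaves as "easy to verify."
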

\begin{proof}
Let $D = \frac{kL}{L-\gamma}$ and define a point $z$ by $z_{uv} = \frac{F_{uv}}{L-\gamma}$. It is easy to verify $z \in \mathcal P(D)$ after noting that $\gamma \leq \lfloor \log_2 n \rfloor$ and $b \geq 1$ implies $L-\gamma \geq 1$.
\end{proof}

The main problem is that the $\frac{kL}{L-\gamma}$ may not be an integer. The following lemma fixes this by mapping a point in $\mathcal P(D)$ to a point in $\mathcal P(\lfloor D \rfloor)$ be modifying flow along ``sawtooth'' $s-t$ paths that alternate between following arcs in the support of $z$ in the forward and reverse directions. This can be used to decrease both $z(\delta^+(s))$ and $z(\delta^-(t))$ while preserving $z(\delta^+(v)) = z(\delta^-(v)) = 1$ at all other nodes. The fact that such a sawtooth path always exists if $D$ is not an integer is a consequence of the fact that the total flow through all nodes in $W-\{s,t\}$ is integral. The full proof appears in Appendix \ref{app:pdpoint}.

\begin{lemma} \label{lem:sawtooth}
If $\mathcal P(D) \neq \emptyset$, then $\mathcal P(\lfloor D \rfloor) \neq \emptyset$.
\end{lemma}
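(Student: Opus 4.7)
If $D$ is already an integer, then $\mathcal{P}(\lfloor D \rfloor) = \mathcal{P}(D)$ and the lemma is immediate, so assume $D \notin \mathbb{Z}$ and set $\varepsilon = D - \lfloor D \rfloor \in (0,1)$. My plan is to encode the sawtooth construction as a max-flow problem in an auxiliary node-split graph $G^*$, and then show by a parity argument bounding every cut that the max flow in $G^*$ has value at least $\varepsilon$.

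I would build $G^*$ with node set $\{s^O, t^I\} \cup \{w^O, w^I : w \in W - \{s,t\}\}$ and, for every arc $uv$ in the subgraph induced by $W$, include a forward arc $u^O \to v^I$ of capacity $z_{uv}$ and a reverse arc $v^I \to u^O$ of capacity $F_{uv} - z_{uv}$; a sawtooth $s$-$t$ path in $G$ is exactly an $s^O$-to-$t^I$ walk in $G^*$. Given any $s^O$-$t^I$ flow $\phi$ of value $\varepsilon$ in $G^*$, I define
\[
z'_{uv} \;=\; z_{uv} \;-\; \phi(u^O \to v^I) \;+\; \phi(v^I \to u^O).
\]
The capacity bounds give $0 \le z'_{uv} \le F_{uv}$. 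Conservation of $\phi$ at $w^O$ preserves $z(\delta^+(w)) = 1$ for internal $w$, conservation at $w^I$ preserves $z(\delta^-(w)) = 1$, and the value $\varepsilon$ of $\phi$ gives $z'(\delta^+(s)) = z'(\delta^-(t)) = D - \varepsilon = \lfloor D \rfloor$. Hence $z' \in \mathcal{P}(\lfloor D \rfloor)$, and everything reduces to showing that the max $s^O$-$t^I$ flow in $G^*$ has value at least $\varepsilon$.

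By max-flow/min-cut, it suffices to lower-bound every cut capacity. I would parametrize an $s^O$-$t^I$ cut by $A = \{w : w^O \in S\}$ (necessarily containing $s$) and $B = \{w : w^I \in S\}$ (necessarily avoiding $t$). Expanding the cut and applying $z(\delta^+(s)) = D$ together with $z(\delta^+(w)) = z(\delta^-(w)) = 1$ for $w \in W - \{s,t\}$, the standard flow-conservation cancellations should collapse the cross terms in $z(A,B)$ and yield
\[
\text{cap}(S) \;=\; D \;+\; |A| \;-\; 1 \;-\; |B| \;+\; F(\bar A, B),
\]
where $\bar A = W - A$. The crucial observation is then a parity one: $|A|$, $|B|$, and $F(\bar A, B)$ are all integers (the last because $F$ itself is integral), so $\text{cap}(S) - D \in \mathbb{Z}$ and hence $\text{cap}(S)$ has fractional part exactly $\varepsilon$. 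Combined with $\text{cap}(S) \ge 0$, this forces $\text{cap}(S) \ge \varepsilon$, which is exactly what we need.

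The step I expect to be the main obstacle is the cut-capacity identity itself: ensuring that the two occurrences of $z(A,B)$ from the forward and reverse parts of the cut cancel cleanly, and handling boundary cases such as $A \cap B \ne \emptyset$ or arcs incident to $s$ or $t$ (where one of the split copies $s^I$, $t^O$ is absent from $G^*$), so that no spurious terms are introduced and only the clean formula above survives.
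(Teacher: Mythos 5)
Your proposal is correct, and it takes a genuinely different route from the paper's proof. The paper works in an undirected bipartite graph on two copies of $W$ with edge weights $z_{uv}$, shows by a degree-counting argument on the set of nodes reachable from $s$ that a sawtooth augmenting path to $t$ exists whenever $D$ is fractional, and then iteratively augments along such paths, decreasing the value by some $\delta$ each time, with a separate termination argument (each non-final augmentation removes an edge from the support of $z$) and with the upper bounds preserved via the observation $z_e \leq 1 \leq F_e$ on incremented edges. You instead obtain the entire correction of value $\varepsilon = D - \lfloor D \rfloor$ in one shot: a max-flow instance on the node-split graph with forward capacities $z_{uv}$ and reverse capacities $F_{uv} - z_{uv}$, plus a parity bound on cuts. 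The identity you flagged as the main obstacle does hold: writing $\bar B = W - B$ and $\bar A = W - A$, the cut capacity is $z(A,\bar B) + \bigl(F(\bar A,B) - z(\bar A,B)\bigr)$; since $s \in A$, $t \notin A$, and $B \subseteq W - \{s,t\}$ (the copies $s^I$ and $t^O$ are absent, and all arcs into $s$ or out of $t$ have $F_{uv}=0$, hence $z_{uv}=0$), one has $z(A,W) = D + |A| - 1$ and $z(W,B) = |B|$, the two $z(A,B)$ cross terms cancel, and the capacity equals $D + |A| - 1 - |B| + F(\bar A,B)$, which is $D$ plus an integer and hence at least $\varepsilon$ by nonnegativity; the computation is insensitive to whether $A$ and $B$ intersect. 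The conservation and capacity checks for $z' = z - \phi_{\mathrm{fwd}} + \phi_{\mathrm{rev}}$ go through exactly as you state. What your argument buys is a non-iterative proof with no termination analysis, delegating the existence of the correction to max-flow/min-cut duality; what the paper's buys is a self-contained elementary argument (a path search and a counting contradiction) that never invokes duality.
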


\begin{corollary} \label{cor:findpaths}
There is an integer $k' \in [k,k + \frac{k}{b}]$ such that $\mathcal P(k') \neq \emptyset$. That is, we we can find between $k$ and $k+\frac{k}{b}$ paths from $s$ to $t$ whose union spans all nodes in $W$. Furthermore, the cost of these paths is at most the cost of the flow $F$ after the main loop of Algorithm \ref{alg:katspp}.
\end{corollary}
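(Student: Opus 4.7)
The plan is to combine Lemmas~\ref{lem:polypoint} and~\ref{lem:sawtooth} to locate an integer $k'$ in the required range for which $\mathcal{P}(k') \neq \emptyset$, and then to extract the paths from an integral point of $\mathcal{P}(k')$. By Lemma~\ref{lem:polypoint}, there exists an integer $\gamma$ with $0 \leq \gamma \leq \lfloor \log_2 n \rfloor$ and $\mathcal{P}(kL/(L-\gamma)) \neq \emptyset$. Setting $k' := \lfloor kL/(L-\gamma) \rfloor$ and applying Lemma~\ref{lem:sawtooth} yields $\mathcal{P}(k') \neq \emptyset$, so it suffices to verify the two inequalities $k \leq k' \leq k + k/b$.

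The lower bound is immediate: $L - \gamma \leq L$ gives $kL/(L-\gamma) \geq k$, and since $k$ is an integer so is its floor. The upper bound $kL/(L-\gamma) \leq k(b+1)/b$ is equivalent to $bL \leq (b+1)(L-\gamma)$, i.e., $(b+1)\gamma \leq L$, and this is exactly what the choice $L = (b+1)\lfloor \log_2 n \rfloor$ guarantees, since $\gamma \leq \lfloor \log_2 n \rfloor$. Hence $k' \leq kL/(L-\gamma) \leq k(b+1)/b = k + k/b$, as required.

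To finish, I invoke the earlier lemma that every basic point of $\mathcal{P}(D)$ is integral when $D$ is an integer, to conclude that $\mathcal{P}(k')$ contains an integral point $z^*$. Since constraints~(\ref{lp:pd-bound}) impose $z^*_{uv} \leq F_{uv}$ and the support of $F$ is acyclic, the support of $z^*$ is acyclic as well, so $z^*$ decomposes into exactly $k'$ simple $s$--$t$ paths; by constraints~(\ref{lp:pd}), each $w \in W-\{s,t\}$ has $z^*(\delta^+(w)) = 1$, and so lies on precisely one of these paths. The cost bound follows from $z^*_{uv} \leq F_{uv}$, together with the fact that Lemma~\ref{lem:uniform} produced a flow of cost no larger than the $F$ emerging from the main loop.

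I do not expect any serious obstacle here: this is essentially a packaging step. The only piece where the design of the algorithm is genuinely used is the arithmetic for the upper bound on $k'$, which is where the specific value $L = (b+1)\lfloor \log_2 n \rfloor$ was calibrated so that $(b+1)\gamma \leq L$ forces the sought ratio.
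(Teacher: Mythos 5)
Your proposal is correct and follows essentially the same route as the paper: combine Lemma~\ref{lem:polypoint} with Lemma~\ref{lem:sawtooth} to get $\mathcal P\bigl(\lfloor kL/(L-\gamma)\rfloor\bigr) \neq \emptyset$, then check $k \leq \lfloor kL/(L-\gamma)\rfloor \leq k + k/b$ using $\gamma \leq \lfloor \log_2 n\rfloor$ and $L = (b+1)\lfloor \log_2 n\rfloor$. The extra details you spell out (integrality of basic points, acyclic path decomposition, and the cost bound via $z_{uv} \leq F_{uv}$ and Lemma~\ref{lem:uniform}) are exactly what the paper establishes in the discussion preceding the corollary, so there is no substantive difference.
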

\begin{proof}
Lemmas \ref{lem:polypoint} and \ref{lem:sawtooth} show $\mathcal P\left(\left\lfloor \frac{kL}{L-\gamma} \right\rfloor\right) \neq \emptyset$ for some $0 \leq \gamma \leq \lfloor \log_2 \rfloor n$. The result follows since
\[ k \leq \left\lfloor \frac{kL}{L-\gamma}\right\rfloor \leq \frac{k(b+1)\lfloor \log_2n \rfloor}{b \lfloor \log_2 n \rfloor} = k + \frac{k}{b}. \]
\end{proof}

In the next section, we complete the proof of Theorem \ref{thm:katspp} by showing the cost is actually at most $O(b \log n)$ times the value of LP relaxation (\ref{lp:katspp}).


\subsection{Bounding the Integrality Gap}

For a subset $W \subseteq V$ containing both $s$ and $t$, let $LP(W)$ denote LP relaxation (\ref{lp:katspp}) on the asymmetric metric induced by $W$. Consider the LP obtained by removing Constraints (\ref{lp:katspp-set}) from $LP(W)$. The resulting LP is integral \cite{schrijver} whose integer points correspond to $k$-path/cycle covers of $W$. The following holds because removing the constraints from a minimizaton LP does not increase its value.

\begin{lemma}\label{lem:katspp_lp}
For any subset $W \subseteq V$ containing $s$ and $t$, the minimum cost of a $k$-path/cycle cover of $W$ is at most the value of $LP(W)$.
\end{lemma}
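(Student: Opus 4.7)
The plan is to follow the hint given in the paragraph immediately preceding the lemma and exploit the fact that $LP(W)$ differs from a standard min-cost flow LP only by the exponentially many cut constraints (\ref{lp:katspp-set}). So I would argue in three short steps.

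First, let $LP'(W)$ denote the LP obtained from $LP(W)$ by deleting the constraints (\ref{lp:katspp-set}). Since $LP'(W)$ is a relaxation of $LP(W)$ (it has a strict subset of the constraints on the same variables, with the same objective), we get $\mathrm{val}(LP'(W)) \leq \mathrm{val}(LP(W))$ for free.

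Second, I would observe that $LP'(W)$ is a transshipment/min-cost flow LP: it has only flow-conservation equalities at interior nodes, fixed net supplies of $k$ at $s$ and $-k$ at $t$, and non-negativity on the $x_{uv}$. The constraint matrix is the node-arc incidence matrix of a directed graph (with trivial sign restrictions), which is totally unimodular. Together with integer right-hand sides this makes every basic feasible solution integral, i.e.\ $LP'(W)$ is an integer polytope in the sense used earlier in this section. Hence its optimum is attained at an integer point.

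Third, I would check that integer feasible points of $LP'(W)$ are precisely the $k$-path/cycle covers of $W$. This is essentially by definition: integrality plus the flow-conservation constraints at interior nodes, $x(\delta^+(s)) = x(\delta^-(t)) = k$, and $x(\delta^-(s)) = x(\delta^+(t)) = 0$ are exactly the conditions in the definition of a $k$-path/cycle cover, and by the triangle inequality we may assume $x_{uv} \in \{0,1\}$ on arcs other than $st$. Putting the three steps together: the minimum-cost $k$-path/cycle cover equals $\mathrm{val}(LP'(W)) \leq \mathrm{val}(LP(W))$, which is the claim.

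There is no real obstacle here; the only thing one might worry about is justifying the integrality of $LP'(W)$, but this is the standard fact that min-cost flow LPs have integral optima (cited as \cite{schrijver} in the paragraph before the lemma), so I would simply invoke that reference rather than reproving total unimodularity.
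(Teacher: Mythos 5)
Your proposal is correct and follows essentially the same route as the paper: drop the cut constraints (\ref{lp:katspp-set}) to get an integral flow LP whose integer points are exactly the $k$-path/cycle covers of $W$, and note that deleting constraints from a minimization LP cannot increase its value. The paper states this in the paragraph preceding the lemma, citing \cite{schrijver} for integrality just as you do.
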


The proof of the next lemma is the same as the proof for the case $k=1$ in \cite{friggstad:salavatipour:svitkina}. It uses splitting off techniques developed by Frank \cite{frank} and Jackson \cite{jackson} for Eulerian graphs. The idea is that we obtain an Eulerian graph by multiplying an optimum basic (thus fractional) point in LP (\ref{lp:katspp}) by a large enough integer and identifying $s$ and $t$. Since we are only concerned with preserving cuts for subsets $S$ of $W$ that include $s$, then using splitting off techniques to bypass nodes in $V-\{s,t\}$ in this Eulerian graph and then scaling back to a fractional point $x'$ will preserve $x'(\delta^+(S)) \geq 1$ in the graph induced by $W$ while not increasing the cost.
\begin{lemma} \label{lem:lp_restrict}
For any subset $W$ of $V$ containing $s$ and $t$, the value of $LP(W)$ is at most the value $OPT_{LP}$ of $LP(V)$.
\end{lemma}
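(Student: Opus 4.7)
The plan is to take an optimal basic feasible solution $x^*$ of $LP(V)$, scale it to an integral Eulerian multigraph, use directed splitting off to eliminate the vertices in $V-W$ one at a time, and then scale back down. The triangle inequality will keep the cost from going up during each split, and Mader's splitting-off theorem for directed Eulerian graphs will preserve the connectivity needed for the subtour constraints (\ref{lp:katspp-set}).

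More concretely, choose a positive integer $N$ so that $Nx^*_{uv} \in \mathbb Z$ for every arc $uv \in A$, and let $G^*$ be the multidigraph on $V$ with exactly $Nx^*_{uv}$ parallel copies of each arc $uv$. The degree constraints of $LP(V)$ make $G^*$ balanced at every $v \in V-\{s,t\}$, and the constraints $x^*(\delta^-(s)) = x^*(\delta^+(t)) = 0$ with $x^*(\delta^+(s)) = x^*(\delta^-(t)) = k$ let me identify $s$ and $t$ to a single vertex $r$ whose in-degree and out-degree both equal $kN$. The result is an Eulerian multidigraph $H$ on vertex set $(V-\{s,t\}) \cup \{r\}$. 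Because $H$ is Eulerian, any directed cut $|\delta^+_H(S)| = |\delta^-_H(S)|$, so constraint (\ref{lp:katspp-set}) (scaled by $N$) together with this symmetry implies every non-trivial cut of $H$ has size at least $N$; equivalently, the pairwise arc-connectivity in $H$ is at least $N$.

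Now process the vertices of $V-W$ one by one. For each such $v$, Mader's directed splitting-off theorem for Eulerian digraphs guarantees a complete pairing of the arcs entering $v$ with those leaving $v$ such that replacing every paired $(u,v),(v,w)$ by a single arc $(u,w)$ preserves the local arc-connectivity between every pair of remaining vertices; the resulting multidigraph is still Eulerian. The directed triangle inequality $d_{uw} \leq d_{uv} + d_{vw}$ guarantees that the cost does not increase. Iterating over all of $V-W$ produces an Eulerian multidigraph $H'$ on $(W-\{s,t\}) \cup \{r\}$ of cost at most the cost of $H$ in which every pair of vertices is still $N$-arc-connected.

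Finally, split $r$ back into $s$ and $t$: declare every arc leaving $r$ to leave $s$ and every arc entering $r$ to enter $t$ (unambiguous since no arcs originally left $t$ or entered $s$), and define $x'_{uv}$ to be $1/N$ times the number of copies of $uv$ in the resulting graph on $W$. Eulerianness of $H'$ gives $x'(\delta^+(v)) = x'(\delta^-(v)) = 1$ for $v \in W-\{s,t\}$ and $x'(\delta^+(s)) = x'(\delta^-(t)) = k$, and for any $S$ with $\{s\} \subseteq S \subsetneq W$, picking any $w \in W-S$ makes $\delta^+_{H'}(S)$ an $r$-to-$w$ cut of size at least $N$, so $x'(\delta^+(S)) \geq 1$. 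Thus $x'$ is feasible for $LP(W)$ and its cost is at most the cost of $x^*$, which equals $OPT_{LP}$. The main obstacle is locating and applying the correct form of Mader's theorem for directed Eulerian multigraphs and verifying that the Eulerian symmetry really does promote the one-sided cut constraints of the LP into the two-sided arc-connectivity needed by splitting off.
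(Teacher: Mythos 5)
Your proposal is correct and follows essentially the same route as the paper's: scale an optimal fractional solution of $LP(V)$ to an integral Eulerian multidigraph, identify $s$ and $t$ into one vertex, split off the vertices of $V-W$ while preserving local arc-connectivity (which, via the Eulerian cut symmetry, is exactly what keeps constraints (\ref{lp:katspp-set}) intact), use the directed triangle inequality to control the cost, and scale back. The only cosmetic difference is attribution: the splitting-off theorem for Eulerian digraphs used here is the one of Frank and Jackson cited in the paper, rather than Mader's.
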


Now we can complete the proof of Theorem \ref{thm:katspp}.

\vskip 2mm
\begin{proof}[Proof of Theorem \ref{thm:katspp}.]
By Lemmas \ref{lem:katspp_lp} and \ref{lem:lp_restrict}, the cost of each $k$-path/cycle cover found in Step (\ref{step:kpathcycle}) is at most $OPT_{LP}$. The proof of Theorem \ref{thm:comb_opt} shows that the final cost of the paths is at most the sum of the costs of each $k$-path/cycle cover found. So, the total cost of the $k + \frac{k}{b}$ paths found is at most $L \cdot OPT_{LP} = O(b \log n) \cdot OPT_{LP}$.
\end{proof}


\subsection{Varying the Endpoints} \label{sec:vary}

Consider the following ways to specify the start locations of the paths:
each path may start at any node (No Source),
all paths start at a common node $s$ (Common Source),
or there are nodes $s_1, \ldots, s_k$ where each must be the start of some path (Multiple Source).
We can also consider analogous ways to specify the end locations of the paths. In Multiple Source/Multiple Sink instances, we only require each path start at some $s_i$ and end at some $t_j$. It may be that some paths start and end at locations with different indices. 

The following theorems are easy to verify and the proofs are only briefly sketched. We let $OPT$ be the cost of the optimum solution using exactly $k$ paths for the $k$-ATSPP variant in question.
\begin{theorem} \label{thm:noend}
For any integer $b \geq 1$, there is an approximation algorithm for the No Source/Single Sink variant of $k$-ATSPP that finds at most $k+\frac{k}{b}$ paths of total cost at most $O(b \log n) \cdot OPT$.
\end{theorem}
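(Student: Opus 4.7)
The plan is to reduce the No Source/Single Sink variant to the standard $k$-ATSPP and then invoke Theorem \ref{thm:katspp} directly. First I would construct an augmented instance $G' = (V', A')$ by adding a single new ``dummy'' node $s^*$ to $V$, setting $d_{s^*v} = 0$ for every $v \in V$ and $d_{vs^*} = M$ for every $v \in V$, where $M$ is chosen larger than the cost of any conceivable solution (for example $M = n \cdot \max_{uv \in A} d_{uv} + 1$). All other arc costs are inherited from $G$. A quick check confirms that $G'$ still satisfies the directed triangle inequality: the only nontrivial cases involve $s^*$, and each reduces to either $d_{uv} \leq M$ (true by choice of $M$) or $0 \leq d_{uv}$.

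Next I would argue that the No Source/Single Sink $k$-ATSPP on $G$ is equivalent to (standard) $k$-ATSPP on $G'$ with source $s^*$ and sink $t$. Given any solution of cost $OPT$ using paths $Q_1,\ldots,Q_k$ with starting nodes $v_1,\ldots,v_k$, prepending the 0-cost arc $s^*v_i$ to each $Q_i$ yields $k$ paths from $s^*$ to $t$ in $G'$ of the same total cost, and covering the same nodes. Conversely, any feasible $k$-ATSPP solution in $G'$ whose cost is smaller than $M$ cannot use any arc entering $s^*$; hence each of its $k$ paths begins with a single 0-cost arc $s^* v_i$ followed by a path in $G$ from $v_i$ to $t$. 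Removing the initial arc yields a No Source/Single Sink solution of the same cost. In particular, the optima coincide, and since the LP is a relaxation we also get $OPT_{LP}' \leq OPT$, where $OPT_{LP}'$ denotes the value of LP (\ref{lp:katspp}) on the $k$-ATSPP instance in $G'$.

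Now I would apply Theorem \ref{thm:katspp} with the same integer $b$ to the $k$-ATSPP instance on $G'$. This produces between $k$ and $k + \frac{k}{b}$ paths from $s^*$ to $t$ of total cost at most $O(b \log |V'|) \cdot OPT_{LP}' = O(b \log n) \cdot OPT$. Because $O(b \log n) \cdot OPT < M$ we may assume the paths returned never use arcs into $s^*$, so each begins with a 0-cost arc out of $s^*$. Stripping these initial arcs gives at most $k + \frac{k}{b}$ paths in $G$ whose union still spans $V$ and whose total cost is at most $O(b \log n) \cdot OPT$, which is exactly the guarantee claimed.

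There is essentially no obstacle here; the only thing to be careful about is confirming that the dummy node preserves the asymmetric metric structure and that the reduction transfers both the combinatorial and LP optima. Both are routine. The analogous proofs for the other Varying the Endpoints variants (No Source/No Sink, Multiple Source/Single Sink, etc.) follow the same template with one or two additional dummy nodes, which is why the paper only sketches them.
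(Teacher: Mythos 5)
Your proof is correct and is essentially the paper's own argument: the paper likewise adds a dummy source $s$ with zero-cost arcs into $V$ and prohibitively expensive (there, $\infty$-cost) arcs back, then invokes Theorem \ref{thm:katspp}; your finite big-$M$ choice and the explicit checks of the triangle inequality and of the back-and-forth correspondence between solutions are just the routine details the paper leaves implicit.
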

\begin{proof}
Simply add a new start node $s$ and set $sv = 0$ and $vs = \infty$ for every $v \in V$. Then use the approximation algorithm from Theorem \ref{thm:katspp}.
\end{proof}

\begin{theorem} \label{thm:varyend}
For any integer $b \geq 1$, there is an approximation algorithm for the Multiple Source/Single Sink variant of $k$-ATSPP that finds at most $k+\frac{k}{b}$ paths of total cost at most $O(b \log (n+k)) \cdot OPT$.
\end{theorem}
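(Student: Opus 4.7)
The plan is to reduce Multiple Source/Single Sink $k$-ATSPP to ordinary $k$-ATSPP by introducing a super source $s$ together with $k$ auxiliary nodes $s'_1, \ldots, s'_k$, then invoke Theorem \ref{thm:katspp} on the enlarged graph. Let $V' = V \cup \{s, s'_1, \ldots, s'_k\}$ and initially set $d'(s, s'_i) = d'(s'_i, s_i) = 0$ for each $i$, $d'(u, v) = d(u, v)$ for $u, v \in V$, and every other arc to $\infty$. Completing $d'$ by the metric closure gives $d'(s, v) = \min_i d(s_i, v)$ and $d'(s'_i, v) = d(s_i, v)$ for $v \in V \setminus \{s_i\}$, while $d'(v, s) = \infty$ for all $v \neq s$ and $d'(v, s'_i) = \infty$ for all $v \neq s$. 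The resulting costs on $V'$ satisfy the triangle inequality.

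Next, apply the algorithm of Theorem \ref{thm:katspp} to $G' = (V', d')$ with source $s$ and sink $t$ to obtain $k' \in [k, k + k/b]$ paths of total cost at most $O(b \log |V'|) \cdot OPT_{LP}(G') = O(b \log(n+k)) \cdot OPT_{LP}(G')$. The bound $OPT_{LP}(G') \leq OPT$ holds because any optimum solution of cost $OPT$ to the Multiple Source instance extends to a feasible integer point of the same cost in the LP on $G'$: simply prepend the zero-cost arcs $s \to s'_j \to s_j$ to the path that starts at $s_j$.

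The key structural property of $G'$ is that every finite-cost arc into $s'_i$ originates at $s$, so each $s'_i$ must appear immediately after $s$ on its path. Consequently, $k$ of the returned paths take the form $s \to s'_{\pi(j)} \to s_{\pi(j)} \to \cdots \to t$ for some bijection $\pi$, and the remaining $k' - k \leq k/b$ paths take the form $s \to v \to \cdots \to t$ with $v \in V \setminus \{s_1, \ldots, s_k\}$ (if $v$ were some $s_i$ then that $s_i$ would lie on two returned paths, contradicting the fact that every node of $V'-\{s,t\}$ is visited exactly once). Strip the prefix $s \to s'_{\pi(j)}$ from each path of the first type to obtain a path $s_{\pi(j)} \to \cdots \to t$ in $G$; for each path $s \to v \to \cdots \to t$ of the second type, let $i^* = \argmin_i d(s_i, v)$ and replace the arc $s \to v$ by $s_{i^*} \to v$. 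Since $d'(s, v) = d(s_{i^*}, v)$ in the metric closure, this conversion preserves cost, and the new path begins at the legitimate source $s_{i^*}$.

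The resulting $k' \leq k + k/b$ paths in $G$ each start at some $s_i$ and end at $t$, every node of $V$ lies on at least one path, and each $s_i$ is the start of at least one path (the one corresponding to $\pi^{-1}(i)$). Their total cost is at most $O(b \log(n+k)) \cdot OPT$, establishing the theorem. The main obstacle is the auxiliary-node analysis: verifying that the in-degree-only-from-$s$ restriction really does pin each $s'_i$ to position one of its path, and that the metric-closure identity $d'(s, v) = d(s_{i^*}, v)$ lets each ``extra'' path be relabeled as starting at a true source without increasing cost.
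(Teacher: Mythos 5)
Your proposal is correct and takes essentially the same route as the paper, whose proof is exactly this construction (a super source $s$, auxiliary nodes $s'_1,\ldots,s'_k$ with zero-cost arcs $s \to s'_i \to s_i$, infinite-cost arcs into $s$, then Theorem \ref{thm:katspp} on the shortest-path metric of the enlarged graph); the paper only sketches it in two sentences while you supply the verification. One minor imprecision: the returned path through $s'_{\pi(j)}$ need not visit $s_{\pi(j)}$ immediately afterward (the metric-closure arc out of $s'_{\pi(j)}$ may skip it, so an extra path could even begin $s \to s_i$), but the same identity $d'(s'_{\pi(j)},v) = d(s_{\pi(j)},v)$ that you invoke for the extra paths lets you expand or relabel in every case, so the argument goes through.
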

\begin{proof}
Let $s_1, \ldots, s_k$ be the multiple sources. Create a start node $s$ and $k$ other new nodes $s'_1, \ldots, s'_k$. Add cost 0 arcs from $s$ to $s'_i$ and from $s'_i$ to $s_i$ for each $i$. Add a cost $\infty$ arc from $v$ to $s$ for every $v \in V$. Use Theorem \ref{thm:katspp} on the shortest paths metric of this graph. The intermediate nodes $s'_i$ are to ensure that each $s_i$ is the start location of some path.
\end{proof}

\vskip 2mm

Combining the constructions in Theorems \ref{thm:noend} and \ref{thm:varyend} can also be used to combine different start and end location specifications ({\em e.g.} No Source/Multiple Sink).


\section{General $k$-ATSPP} \label{sec:general_katspp}
Recall that in General $k$-ATSPP, we are given $k$ pairs of nodes $(s_1,t_1), \ldots, (s_k,t_k)$. The goal is to find an $s_i-t_i$ path for each $1 \leq i \leq k$ so that every $v \in V$ lies on at least one such path. This differs from the Multiple Source/Multiple Sink variant described in Section \ref{sec:vary} since the path at $s_i$ must end at $t_i$, rather than merely requiring that the start and end nodes of the paths establish a bijection between $\{s_1, \ldots, s_k\}$ and $\{t_1, \ldots, t_k\}$.

\subsection{Approximating General 2-ATSPP}
Let $(s_1,t_1)$ and $(s_2,t_2)$ be pairs of nodes we are to connect. Furthermore, suppose all four of these endpoints are distinct (by creating multiple copies of locations if necessary). This means there is an optimum General 2-ATSPP solution where the two paths are vertex disjoint.

Notice that the optimum Multiple Source/Multiple Sink 2-ATSPP solution for the case with sources $\{s_1,s_2\}$ and sinks $\{t_1, t_2\}$ is a lower bound for $OPT$. The first step of the algorithm is to run an $\alpha$-approximation for this variant of 2-ATSPP. This gives us two paths $P_1, P_2$ starting at $s_1, s_2$, respectively. If $P_1$ ends at $t_1$ (equivalently, $P_2$ ends at $t_2$), then these paths form a valid solution for the General 2-ATSPP problem with cost at most $\alpha \cdot OPT$. Otherwise, we have $s_1-t_2$ and $s_2-t_1$ paths. We use the following lemma which is proven in Appendix \ref{app:splice}. It gives us a relatively cheap way to adjust these paths to get $s_1-t_1$ and $s_2-t_2$ paths.

\begin{lemma} \label{lem:splice}
There are nodes $u_1,v_2$ on $P_1$ and $v_1,u_2$ on $P_2$ such that the following hold: a) $u_1=v_2$ or $u_1v_2$ is an arc on $P_1$, b) $v_1=u_2$ or $v_1u_2$ is an arc on $P_2$ and c) $d_{u_1u_2} + d_{v_1v_2} \leq OPT$.
\end{lemma}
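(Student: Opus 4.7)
The plan is to use the two optimum paths $Q_1$ (from $s_1$ to $t_1$) and $Q_2$ (from $s_2$ to $t_2$), whose combined cost is $OPT$, to pick suitable split points.

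First I would reduce without loss of generality to the case in which $P_1$ and $P_2$ are internally vertex-disjoint: any shared node can be split into copies joined by zero-cost arcs, which preserves both $OPT$ and the cost of $P_1, P_2$. This gives a partition $V = V(P_1) \sqcup V(P_2)$ with $s_1, t_2 \in V(P_1)$ and $s_2, t_1 \in V(P_2)$. Refining by which $Q_i$ each node lies on yields four classes $A = V(P_1) \cap V(Q_1)$, $B = V(P_1) \cap V(Q_2)$, $C = V(P_2) \cap V(Q_1)$, $D = V(P_2) \cap V(Q_2)$, with $s_1 \in A$, $t_2 \in B$, $t_1 \in C$, $s_2 \in D$.

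The next observation is that since $Q_1$ begins in $A$ and ends in $C$ it must contain at least one $A$-to-$C$ arc, and symmetrically $Q_2$ must contain a $D$-to-$B$ arc. Taking any such pair of arcs as the ``jumps'' $(u_1,u_2)$ and $(v_1,v_2)$ would give $d_{u_1 u_2} + d_{v_1 v_2} \le \mathrm{cost}(Q_1) + \mathrm{cost}(Q_2) = OPT$, satisfying condition (c). The difficulty is that the nodes $u_1, v_2 \in V(P_1)$ obtained this way need not be adjacent on $P_1$, and similarly for $v_1, u_2 \in V(P_2)$, so conditions (a) and (b) may fail.

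To recover the adjacency conditions I would introduce two further families of candidate splits and argue that at least one of the three families always succeeds. In the first additional family, fix a node $w \in C$ and set $u_2 = v_1 = w$ (which satisfies (b) trivially); the split on $P_1$ is an adjacent pair $u_1, v_2 \in A$ chosen to straddle $w$ in the order along $Q_1$. The triangle inequality applied to the $Q_1$-subpath from $u_1$ through $w$ to $v_2$ then gives $d_{u_1 w} + d_{w v_2} \le \mathrm{cost}(Q_1) \le OPT$. The second additional family is symmetric: take a node $w' \in B$ as $u_1 = v_2$ together with adjacent $D$-nodes on $P_2$ flanking $w'$ in $Q_2$-order, yielding a bound via $\mathrm{cost}(Q_2)$.

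The main obstacle will be the case analysis showing that these three families collectively cover all instances. The rough idea is that whenever the ``crossing''-style split cannot be made adjacent on $P_1$, the obstruction itself (a $B$-node sitting between two $A$-nodes on $P_1$, say) combined with how $Q_1$ traverses those nodes forces a $C$-node of $Q_1$ to lie in $Q_1$-order between the two $A$-nodes, which is exactly the configuration needed for the first adjustment family; symmetrically for $P_2$. I expect this combinatorial dichotomy to be the most delicate part, requiring careful bookkeeping of orderings along all four of $P_1, P_2, Q_1, Q_2$ and some attention to degenerate cases (for instance, when one of $A, B, C, D$ consists of a single endpoint of $V$).
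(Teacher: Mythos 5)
There is a genuine gap: your write-up identifies three candidate configurations and correctly verifies the cost bound \emph{within} each family (a single $Q_1$-arc plus a single $Q_2$-arc, or a $Q_1$-subpath through $w$, or a $Q_2$-subpath through $w'$), but the actual content of the lemma is the claim you defer --- that in every instance one of these configurations can be realized \emph{simultaneously} with the adjacency conditions (a) and (b). Your ``crossing'' family as stated never satisfies (a)/(b) by itself (you say so), and the sketched dichotomy (``a $B$-node between two $A$-nodes on $P_1$ forces a $C$-node of $Q_1$ between them in $Q_1$-order'') is asserted, not proved, and is not obviously true: nothing prevents $Q_1$ from visiting the two $A$-nodes consecutively while the intervening $B$-node is handled by $Q_2$, in which case your first adjustment family has no valid $w$ for that pair and you must fall back on a different pair or a different family --- exactly the bookkeeping you have not done. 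The paper's proof spends essentially all of its effort on this selection step: it takes $ab$ to be the \emph{first} arc of $P_1$ crossing from the optimum path $P_1'$ to $P_2'$, then $x$ the first node of $P_2$ with $a \prec x$ on $P_1'$, then $y$ the last node of $P_2$ before $x$ that lies on $P_1'$ or precedes $b$ on $P_2'$, and a three-way case analysis on $y$ produces nodes meeting (a), (b), (c) at once. Note also that in the paper's analogue of your ``crossing'' case, the adjacency comes from arcs of $P_1$ and $P_2$ (namely $ab$ and $yx$), and the jump costs $d_{ax}, d_{yb}$ are charged to \emph{ordered pairs} on $P_1'$ and $P_2'$ via the triangle inequality, not to single arcs of the optimum --- this is the configuration that actually reconciles (a)--(c), and it is missing from your family list.

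Two smaller points. First, your reduction to $P_1, P_2$ being vertex-disjoint by node-splitting is not needed and requires some care (you must check that splitting does not decrease $OPT$ and that a splice in the modified metric maps back); the paper instead only makes the optimum paths $P_1', P_2'$ disjoint, which follows from duplicating the four endpoints, and then argues purely in terms of orderings ($\prec$) along the four paths. Second, your partition into classes $A,B,C,D$ implicitly assumes every node of $P_1 \cup P_2$ lies on exactly one of $Q_1, Q_2$, which again holds only after that disjointness is arranged. As it stands, the proposal is a plausible plan whose decisive combinatorial step is left open, so it does not yet constitute a proof of the lemma.
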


The rest of the General 2-ATSPP algorithm proceeds as follows. Try all $O(n^2)$ guesses for $u_1,u_2,v_1,v_2$ where either $u_1=v_2$ or $u_1v_2$ is an arc on $P_1$ and either $v_1=u_2$ or $v_1u_2$ is an arc on $P_2$. For each guess, construct a path $Q_1$ by traveling along $P_1$ from $s_1$ to $u_1$, then using the $u_1u_2$ arc in $G$, and then traveling along $P_2$ from $u_2$ to $t_1$.  Construct $Q_2$ in a similar manner by traveling from $s_2$ to $v_1$ on $P_2$, then using the $v_1v_2$ arc in $G$, and then traveling along $P_1$ from $v_2$ to $t_2$. It is easy to see that $Q_1$ and $Q_2$ form a feasible solution for this General 2-ATSPP instance. Since each arc on $P_1$ and $P_2$ is traversed at most once by $Q_1$ and $Q_2$, then the total cost of $Q_1$ and $Q_2$ is at most $\alpha \cdot OPT + d_{u_1u_2} + d_{v_1v_2}$. Output the cheapest solution found over these guesses.

When the algorithm guesses nodes $u_1, u_2, v_1, v_2$ from Lemma \ref{lem:splice} we have $d_{u_1u_2} + d_{v_1v_2} \leq OPT$. It is straightforward to verify that each arc in $P_1$ and $P_2$ is used at most once between $Q_1$ and $Q_2$. So the final cost of $Q_1$ and $Q_2$ is at most $(\alpha+1)\cdot OPT$. To summarize:

\begin{theorem} \label{thm:general_2atspp}
If there is an $\alpha$-approximation for the Multiple Source/Multiple Sink variant of 2-ATSPP then there is an $(\alpha + 1)$-approximation for General 2-ATSPP.
\end{theorem}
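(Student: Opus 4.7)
The plan is to invoke the hypothesized $\alpha$-approximation for the Multiple Source/Multiple Sink variant as a black box on the input with sources $\{s_1,s_2\}$ and sinks $\{t_1,t_2\}$. Since any feasible General 2-ATSPP solution is also feasible for this relaxed variant (it realizes \emph{some} bijection between sources and sinks), its optimum is at most $OPT$, and the black box returns paths $P_1,P_2$ of total cost at most $\alpha\cdot OPT$. If $P_1$ runs from $s_1$ to $t_1$ and $P_2$ from $s_2$ to $t_2$ we are already done; otherwise $P_1$ is an $s_1$--$t_2$ path and $P_2$ an $s_2$--$t_1$ path, and we must rewire them into the required pairing without inflating the cost too much.

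Here is where Lemma~\ref{lem:splice} does the heavy lifting: it produces four nodes $u_1,v_2\in P_1$ and $v_1,u_2\in P_2$ that are ``locally adjacent'' on their respective paths (either coincident or linked by a single path arc) and satisfy $d_{u_1u_2}+d_{v_1v_2}\le OPT$. Since these nodes are not known a priori, the algorithm enumerates all $O(n^2)$ quadruples satisfying the local-adjacency conditions (a) and (b). For each quadruple, build $Q_1$ by traversing $P_1$ from $s_1$ to $u_1$, using the arc $u_1u_2$, and traversing $P_2$ from $u_2$ to $t_1$; build $Q_2$ symmetrically via $s_2\to v_1\to v_2\to t_2$. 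Output the cheapest feasible $(Q_1,Q_2)$ pair encountered.

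For the cost bound, the critical observation is that the local-adjacency conditions are exactly what is needed to guarantee that each arc of $P_1\cup P_2$ appears in at most one of $Q_1,Q_2$. On $P_1$, the prefix ending at $u_1$ is used by $Q_1$ and the suffix starting at $v_2$ is used by $Q_2$; since either $u_1=v_2$ or $u_1v_2$ is an arc of $P_1$, these two pieces are arc-disjoint. The identical argument applies to $P_2$. Therefore the total cost of $(Q_1,Q_2)$ is at most $\mathrm{cost}(P_1)+\mathrm{cost}(P_2)+d_{u_1u_2}+d_{v_1v_2}\le \alpha\cdot OPT + OPT=(\alpha+1)\cdot OPT$ when the enumeration hits the quadruple supplied by the lemma.

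The main obstacle, beyond invoking Lemma~\ref{lem:splice}, is precisely this arc-disjointness step: if the splice points were chosen without the local-adjacency conditions, one could easily reuse arcs of $P_1$ or $P_2$ and lose the cost bound. A minor preliminary step is to assume $s_1,s_2,t_1,t_2$ are distinct (duplicating locations with zero-cost arcs if necessary), as noted at the start of the subsection; any repeated nodes in the resulting walks can be removed by standard shortcutting under the directed triangle inequality without increasing cost.
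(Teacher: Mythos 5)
Your proposal is correct and follows essentially the same route as the paper: lower-bound $OPT$ by the Multiple Source/Multiple Sink optimum, run the $\alpha$-approximation, and if the pairing comes out wrong, enumerate splice quadruples and apply Lemma~\ref{lem:splice} to rewire $P_1,P_2$ into $Q_1,Q_2$ with the arcs of $P_1\cup P_2$ used at most once, giving cost at most $(\alpha+1)\cdot OPT$. Your explicit arc-disjointness argument via the local-adjacency conditions is exactly the verification the paper leaves as "straightforward."
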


By setting $b=3$ and using Theorem \ref{thm:varyend} composed with an analogous result for Multiple Sink instances, Theorem \ref{thm:general_2atspp} gives us the following.

\begin{corollary}
There is an $O(\log n)$-approximation for General 2-ATSPP.
\end{corollary}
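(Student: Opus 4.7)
The plan is to apply Theorem~\ref{thm:general_2atspp} with a carefully chosen approximation $\alpha$ for the Multiple Source/Multiple Sink variant of 2-ATSPP. By that theorem, any $\alpha$-approximation for the Multiple Source/Multiple Sink variant gives an $(\alpha+1)$-approximation for General 2-ATSPP, so it suffices to exhibit an $O(\log n)$-approximation for Multiple Source/Multiple Sink 2-ATSPP that returns \emph{exactly} $k=2$ paths.

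First I would obtain a Multiple Source/Multiple Sink variant of Theorem~\ref{thm:varyend} by combining the reductions of Theorems~\ref{thm:noend} and~\ref{thm:varyend}: prepend to the graph a cost-0 gadget through $k$ intermediate nodes $s'_i$ on the source side that forces each $s_i$ to be a start of some path, and symmetrically append a cost-0 gadget on the sink side through $k$ intermediate nodes $t'_j$ that forces each $t_j$ to be the end of some path, using $\infty$-cost arcs to forbid reversed traversals. Applying Theorem~\ref{thm:katspp} on this augmented graph of size $O(n+k)$ yields, for any integer $b\geq 1$, between $k$ and $k+k/b$ paths of total cost $O(b\log(n+k))\cdot OPT$ that establish a bijection between the sources and the sinks.

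Now set $k=2$ and $b=3$. Then the number of paths returned is an integer in the interval $[k,\,k+k/b]=[2,\,8/3]$, and the only such integer is $2$, so we obtain exactly two paths. The cost bound becomes $O(b\log(n+k))\cdot OPT = O(\log n)\cdot OPT$, giving an $\alpha$-approximation for Multiple Source/Multiple Sink 2-ATSPP with $\alpha=O(\log n)$. Plugging this into Theorem~\ref{thm:general_2atspp} yields an $(\alpha+1)=O(\log n)$-approximation for General 2-ATSPP.

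The only subtlety is verifying that $k+k/b$ truly forces exactly $k$ paths at $k=2,b=3$ (as opposed to the algorithm sometimes returning $k+1$ paths), which is immediate from the integrality of the output path count together with $k+k/b<k+1$. Everything else reduces to standard reductions already used in Section~\ref{sec:vary} and the previously established Theorems~\ref{thm:katspp},~\ref{thm:varyend}, and~\ref{thm:general_2atspp}.
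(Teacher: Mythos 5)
Your proposal is correct and follows essentially the same route as the paper: set $b=3$ so that the number of paths returned, an integer in $[2, 2+\tfrac{2}{3}]$, is exactly $2$, use the combined source/sink gadget reductions (Theorems \ref{thm:noend} and \ref{thm:varyend} and its Multiple Sink analogue) to get an $O(\log n)$-approximation for Multiple Source/Multiple Sink 2-ATSPP, and then invoke Theorem \ref{thm:general_2atspp}. The only difference is that you spell out the integrality argument forcing exactly two paths, which the paper leaves implicit in its one-line justification.
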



\subsection{Approximating Other Restrictions of General $k$-ATSPP}

We breifly mention a couple of variants of General $k$-ATSPP that can be approximated well. We leave their full descriptions to Appendix \ref{app:general_katspp} since they are simple variants of known algorithms for some TSP variants.

The first variant is when the metric is symmetric. In this case, there is a simple 3-approximation using a tree doubling approach. Next, if $s_i = t_i$ for each $1 \leq i \leq k$, then the problem is to find a cycle cover where each cycle contains some root node $s_i$ where we allow cost 0 loops on the nodes $s_i$ (corresponding to the salesman at $s_i$ going directly to $t_i$). This version can be approximated within $\lfloor\log_2 (n - k)\rfloor + 1$ using a modification of the ATSP algorithm by Frieze {\em et al.} \cite{frieze:galbiati:maffioli}.


\subsection{Hardness of General $k$-ATSPP}

We will use the following NP-complete.

\begin{definition}
In the Tripartite Triangle Packing problem,
we are given a tripartite graph $G = (U \cup V \cup W, E)$ with $|U| = |V| = |W| = n$ where no edge in $E$ has both endpoints in a common partition $U,V$, or $W$. A triangle is a subset of 3 nodes for which any two are adjacent in $G$. The problem is to determine if it is possible to find $n$ vertex-disjoint triangles in $G$.
\end{definition}

NP-completeness of this problem is essentially shown in \cite{garey:johnson}. Technically, only related problems are proven to be NP-complete in this book. However, consider the 3D Matching problem that is proven to be NP-complete in Theorem 3.2 on page 50. If each triple $M = \{u,v,w\}$ is replaced with edges $uv,vw,wu$ then we obtain an instance of Tripartite Triangle Packing. Careful inspection of the proof of Theorem 3.2 of \cite{garey:johnson} shows that the only triangles $\{uv,vw,wu\}$ that can be formed must have come from some triple $M = \{u,v,w\}$ in the 3D Matching reduction. Thus, the Tripartite Triangle Packing problem is also NP-complete.

For our reduction to General $k$-ATSPP, let $G = (U \cup V \cup W, E)$ be an instance of Tripartite Triangle Packing with $|U| = |V| = |W| = n$. Create a directed graph $H$ with four layers of nodes $X_1, X_2, X_3, X_4$ where $X_1$ and $X_4$ are disjoint copies of $U$, $X_2$ is a copy of $V$, and $X_3$ is a copy of $W$. For every edge $e$ in $G$, there is a unique index $1 \leq i \leq 3$ such that the endpoints of $e$ lie in $X_i$ and $X_{i+1}$. Add this arc to $H$,  direct it from $X_i$ to $X_{i+1}$, and set its cost to 0. This is illustrated in Figure \ref{fig:reduction}. Set $k := n$ and consider the General $k$-ATSPP instance on $H'$ obtained from the shortest paths metric $H$ where we set the cost of a $uv$ arc to be 1 if there is no $u-v$ path in $H$. For each $u \in U$, we have a source/sink pair from the copy of $u$ in $X_1$ to the copy of $u$ in $X_4$.

\begin{figure}
\centering
\includegraphics[scale=0.7]{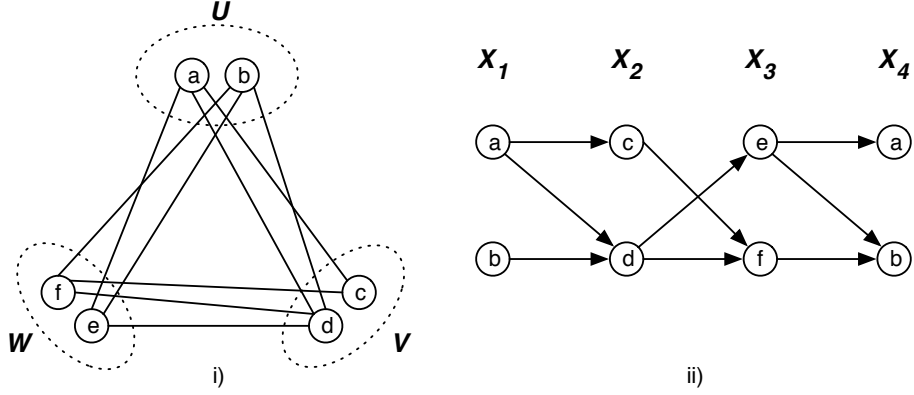}
\caption{
i) An instance of tripartite triangle packing with $n = 2$. ii) The corresponding graph $H$.
}
\label{fig:reduction}
\end{figure}

The details of the following claim are simple and can be found in Appendix \ref{app:reduce}. The proof of Theorem \ref{thm:general_hard} immediately follows.

\begin{claim} \label{claim:hard}
There is a Tripartite Triangle Packing solution in $G$ if and only if the optimum General $k$-ATSPP solution in $H'$ solution has cost 0.
\end{claim}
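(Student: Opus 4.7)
The plan is to verify the two directions of the equivalence by exploiting the layered structure of $H$: because all arcs in $H$ go forward from $X_i$ to $X_{i+1}$, any $s_i$-$t_i$ path in $H'$ of cost $0$ must lie entirely in $H$ and therefore consist of exactly three arcs, one between each consecutive pair of layers.

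For the ``only if'' direction, suppose we have $n$ vertex-disjoint triangles $\{u_i, v_i, w_i\}$ with $u_i \in U, v_i \in V, w_i \in W$ in $G$. For each $i$, the edges $u_iv_i, v_iw_i, w_iu_i$ become arcs of cost $0$ from the copy of $u_i$ in $X_1$ to the copy of $v_i$ in $X_2$, from there to $w_i \in X_3$, and from there to the copy of $u_i$ in $X_4$. The resulting $n$ paths have total cost $0$, each path connects the required source/sink pair, and because the triangles are vertex-disjoint and $|V| = |W| = n$, every node of $X_2$ and $X_3$ lies on exactly one path. Every node of $X_1 \cup X_4$ is either a source or sink of one of these paths, so all of $H'$ is covered.

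For the ``if'' direction, assume we have a General $k$-ATSPP solution of cost $0$. Because any arc not corresponding to an $H$-path costs $1$ and the $H$-distance between two nodes in the same layer or in a ``backward'' layer is $+\infty$ (there is no such path in $H$), every arc used in any of the $k$ paths must be a single forward arc in $H$. In particular, each $s_i$-$t_i$ path travels $X_1 \to X_2 \to X_3 \to X_4$ using exactly one node from each layer, so it is of the form $u \to v \to w \to u'$ where $u \in X_1$ and $u' \in X_4$ are the two copies of the same element of $U$ (forced by the pairing), and $uv, vw, wu'$ being arcs of $H$ means $\{u,v,w\}$ is a triangle in $G$. There are $k=n$ such paths, and to cover all of $X_2$ and $X_3$ (each of size $n$) the middle nodes of these paths must be distinct across $X_2$ and across $X_3$; hence the $n$ triangles are vertex-disjoint and form a valid Tripartite Triangle Packing.

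The only mildly subtle step is the one just made: ruling out reuse of a single $H$-arc by multiple paths and confirming that a cost-zero path cannot ``revisit'' a layer. Both follow immediately once one observes that $H$ has no backward arcs, so every path is monotone across layers and uses exactly three arcs; the rest is a counting argument using $k=|V|=|W|=n$. Given Claim~\ref{claim:hard}, Theorem~\ref{thm:general_hard} follows because an approximation algorithm with any bounded ratio would distinguish the two cases in polynomial time, contradicting NP-hardness of Tripartite Triangle Packing.
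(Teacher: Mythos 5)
Your forward direction is exactly the paper's argument and is fine. The reverse direction, however, has a genuine gap at its key step. You assert that in a cost-$0$ solution ``every arc used in any of the $k$ paths must be a single forward arc in $H$,'' justifying this only by the fact that same-layer and backward pairs are at distance $1$. That justification is insufficient: $H'$ is the \emph{shortest-path metric} of $H$, so an arc of $H'$ from a node of $X_1$ to a node of $X_3$ (or $X_4$), or from $X_2$ to $X_4$, also has cost $0$ whenever a corresponding $2$- or $3$-arc path exists in $H$. Hence a cost-$0$ salesman path could a priori skip layers entirely (e.g.\ go straight from $s_i\in X_1$ to $t_i\in X_4$), and your conclusion that each path uses exactly three arcs, one per consecutive pair of layers, does not follow from monotonicity alone.

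What closes this hole --- and what the paper does --- is a counting argument placed \emph{before} that conclusion: monotonicity across layers implies each path visits at most one node per layer, hence at most $4$ nodes including its endpoints; since there are only $k=n$ paths and $4n$ nodes of $H'$ must all be covered, every path must visit exactly one node in each layer. Only then is every step of every path a cost-$0$ arc between consecutive layers, which (because $H$-paths between consecutive layers are single arcs) corresponds to an edge of $G$, so each path yields a triangle, and the per-layer counting also gives vertex-disjointness. In your write-up the counting with $k=|V|=|W|=n$ is invoked only to get distinctness of the ``middle'' nodes, and at that point it already presupposes that each path has a node in $X_2$ and in $X_3$ --- precisely the fact that needs the counting. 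The repair is short (insert the ``at most $4$ nodes per path, $4n$ nodes total'' argument before claiming each path has three arcs), but as written the step would fail.
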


Note that the value $k$ is $n/4$ in the above reduction. Similar hardness results for smaller values of $k$ (as a function of $n$) are established in Appendix \ref{app:smallk} down to $k$ being polylogarithmic in $n$. The complexity of approximating the case when $k$ is a constant at least 3 remains open.


\section{Future Directions} \label{sec:conc}

Our best approximation for $k$-ATSPP that uses exactly $k$ paths has an approximation guarantee of $O(k \log n)$. Can the dependence on $k$ in the approximation ratio be reduced? Perhaps there is an $O({\rm polylog}(n,k))$-approximation for $k$-ATSPP that uses only $k$ paths. On the other hand, the problem might be hard to approximate much better than $k$. Also, as far as we know the integrality gap of LP (\ref{lp:katspp}) could be $\Omega(k)$.

For General $k$-ATSPP, the case $k=1$ is simply ATSPP and we proved an $O(\log n)$-approximation for $k=2$. Is there a more general $O(f(k) \cdot \log n)$-approximation for General $k$-ATSPP whose running time is polynomial when $k$ is a constant?

Finally, rather than minimizing the total cost of all paths we might want to minimize the cost of the most expensive path. This can be thought of as minimizing the total time it takes for agents moving simultaneously to visit all locations. From Theorem \ref{thm:katspp} and the observation that the total cost of $k$ paths is at most $k$ times the cost of the most expensive path, we get an $O(bk \log n)$-approximation for this variant that uses at most $k + \frac{k}{b}$ paths. Also, Theorems \ref{thm:orient} and \ref{thm:kstroll} (Appendix \ref{sec:makespan}) show that approximation algorithms for Directed Orienteering and Directed $k$-Stroll can be used to obtain other bicriteria approximation algorithms. The current best approximation algorithms for Directed Orienteering \cite{chekuri:korula:pal, nagarajan:ravi:polylog} and Directed $k$-Stroll  \cite{bateni:chuzhoy} imply the following:
\begin{corollary}
There is an efficient algorithm that finds $O(k \log^3 n)$ paths from $s$ to $t$ of maximum cost $OPT$ whose union covers all nodes in $V$.
\end{corollary}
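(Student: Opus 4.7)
The plan is to invoke the reduction promised in Theorem \ref{thm:orient} (from Appendix \ref{sec:makespan}) and plug in the current best approximation for Directed Orienteering. First observe the guarantee we need from orienteering: in an optimum makespan solution, $k$ paths from $s$ to $t$, each of cost at most $OPT$, together cover every node, so by pigeonhole at least one of them covers $\lceil n/k \rceil$ nodes. Hence the maximum number of nodes visited by any $s$-$t$ path of cost $OPT$ is at least $n/k$, and an $\alpha$-approximation for Directed Orienteering with budget $OPT$ returns an $s$-$t$ path of cost at most $OPT$ visiting at least $n/(\alpha k)$ of the uncovered nodes.

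Next I would run this orienteering routine iteratively. After each iteration, delete the newly covered interior nodes from consideration (working in the shortest-paths metric induced on the remaining nodes together with $s,t$, so that an $s$-$t$ path of cost at most $OPT$ still exists in the residual instance by shortcutting the optimal solution). A standard set-cover style potential argument shows that after $T = O(\alpha k \log n)$ iterations the number of uncovered nodes has dropped below $1$, at which point every node has been visited by some produced path. This yields a collection of $O(\alpha k \log n)$ paths from $s$ to $t$, each of cost at most $OPT$, whose union covers $V$. The value of $OPT$ itself can be guessed by trying each of the $O(n^2)$ candidate thresholds that arise as an arc cost, or equivalently by binary search.

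Finally I would substitute the current best bound $\alpha = O(\log^2 n)$ for Directed Orienteering, due to Chekuri--Korula--P\'al \cite{chekuri:korula:pal} and Nagarajan--Ravi \cite{nagarajan:ravi:polylog}, giving $O(k\log^3 n)$ paths of cost at most $OPT$, which is exactly the stated bound. An analogous calculation based on Theorem \ref{thm:kstroll} and the Bateni--Chuzhoy $k$-Stroll approximation would give an alternative derivation, but the orienteering route directly yields the $\log^3 n$ factor claimed.

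The main obstacle is purely bookkeeping: making sure that after contracting/removing covered nodes, the residual instance still admits an $s$-$t$ path of cost at most $OPT$ (so that the orienteering approximation remains valid against the same budget), and that $OPT$ is guessed correctly. Both issues are handled by replacing the graph by its shortest-paths metric restricted to the uncovered nodes plus $\{s,t\}$ and by enumerating the at most $\binom{n}{2}$ candidate values of $OPT$; no new ideas beyond Theorem \ref{thm:orient} and the cited orienteering result are required.
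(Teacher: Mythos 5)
Your proposal takes the paper's own route: the corollary is obtained by instantiating Theorem \ref{thm:orient} with the polylogarithmic ($O(\log^2 n)$) approximation for Directed Orienteering from \cite{chekuri:korula:pal, nagarajan:ravi:polylog}, and your iterative coverage loop with the shortcutting/pigeonhole argument is essentially a re-proof of that theorem. Two points of precision are worth flagging. First, as the paper defines Directed Orienteering only the start node is fixed, so an $\alpha$-approximate orienteering path with budget $OPT$ need not end at $t$; appending the final hop to $t$ costs up to an extra $OPT$, which is exactly why the first part of Theorem \ref{thm:orient} only yields paths of length $2\cdot OPT$. To obtain paths of maximum cost $OPT$, as the corollary asserts, you must invoke the point-to-point variant of Directed Orienteering in which both endpoints are specified (the second sentence of Theorem \ref{thm:orient}); the cited results do handle this variant with the same polylogarithmic ratio, so your conclusion stands, but your statement that ``an $\alpha$-approximation for Directed Orienteering with budget $OPT$ returns an $s$--$t$ path of cost at most $OPT$'' is not true for the plain rooted version. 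Second, guessing $OPT$ by enumerating the $O(n^2)$ arc costs is not valid: $OPT$ is the cost of a path, i.e.\ a sum of up to $n$ arc costs, and need not coincide with any single arc cost. The binary-search fallback you mention (which is what the paper uses) is the correct mechanism: keep the smallest budget $B$ for which the covering loop succeeds; since success is guaranteed whenever $B \geq OPT$, the paths returned have cost at most $OPT$.
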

\begin{corollary}
There is an $O(\log^3 n)$-approximation uses at most $O(k \log n)$ paths.
\end{corollary}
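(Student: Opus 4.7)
The plan is to iteratively apply an approximation algorithm for Directed $k$-Stroll. Let $OPT$ denote the optimal makespan of the $k$-ATSPP instance, and recall that the optimum consists of $k$ paths, each of cost at most $OPT$, that together cover all of $V$. I will maintain a set $U$ of uncovered nodes, initially $V\setminus\{s,t\}$, and greedily peel off one $s$-$t$ path per iteration that knocks out at least a $1/k$ fraction of $U$.

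In each iteration, I invoke the Directed $k$-Stroll oracle of Bateni and Chuzhoy, with target $\lceil|U|/k\rceil$, on the shortest-path metric induced on $\{s,t\}\cup U$ (so that any visit to a previously-covered node is replaced by a triangle-inequality shortcut). The key observation is that the $k$ paths of an optimal solution, restricted via shortcutting to $\{s,t\}\cup U$, remain $k$ $s$-$t$ paths of cost at most $OPT$ whose union still covers all of $U$; by averaging, at least one of them covers at least $\lceil|U|/k\rceil$ nodes of $U$. Hence the optimum of the $k$-Stroll instance solved in this iteration is at most $OPT$, and the $\beta$-approximation oracle returns an $s$-$t$ path of cost at most $\beta\cdot OPT$ that covers at least $\lceil|U|/k\rceil$ previously-uncovered nodes.

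After $t$ iterations we have $|U_t|\le n(1-1/k)^t\le n\,e^{-t/k}$, so $|U_t|<1$ once $t=O(k\log n)$; at that point the collected paths form a feasible makespan $k$-ATSPP solution using $O(k\log n)$ paths in total. Since every individual path has cost at most $\beta\cdot OPT$, plugging in the current best ratio $\beta=O(\log^3 n)$ for Directed $k$-Stroll from \cite{bateni:chuzhoy} yields an $O(\log^3 n)$-approximation of the makespan with $O(k\log n)$ paths, as claimed; this is presumably the content of Theorem \ref{thm:kstroll} applied as a black box.

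The main obstacle is the precise interface to the $k$-Stroll oracle: as usually stated it asks for a path covering $k'$ vertices of the whole metric, whereas I need coverage of $k'$ vertices from a specified subset $U$. Passing the oracle the induced shortest-path metric on $\{s,t\}\cup U$ is the cleanest fix, since every vertex of that smaller instance now counts toward the target; this keeps the $O(\log^3 n)$ ratio intact because the inherited metric is a subgraph metric whose optimum value is still bounded by $OPT$. The only delicate point is to verify that the shortcutting argument in the previous paragraph does not inflate the cost of the reference optimal path, which follows directly from the directed triangle inequality.
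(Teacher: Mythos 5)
Your proposal is correct and follows essentially the same route as the paper: it reproves Theorem \ref{thm:kstroll} (iteratively running the Directed $\lceil |W|/k\rceil$-Stroll approximation on the metric induced by $\{s,t\}$ plus the uncovered nodes, using shortcutting of the optimum to bound each subproblem's optimum by $OPT$, with the $(1-1/k)$ geometric decrease giving $O(k\log n)$ iterations) and then plugs in the $O(\log^3 n)$ Directed $k$-Stroll ratio of \cite{bateni:chuzhoy}, exactly as the paper does.
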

We leave it as an open problem to improve these bounds. In particular, is it possible to obtain a polylogarithmic approximation that uses only $O(k)$ paths? We also note that the hardness results for General $k$-ATSPP proven in this paper also hold for the the variant where we want to minimize the maximum cost of the $s_i-t_i$ paths.

\section{Acknowledgements}
The author would like to thank Anupam Gupta, Mohammad R. Salavatipour, and Zoya Svitkina for insightful discussions on these problems.


\bibliography{katspp.bbl}


\appendix

\section{Appendix}

\subsection{Proof of Lemma \ref{lem:circsupport}} \label{app:circ}

\begin{proof} [~]
We see that $C'$ is an integral circulation since it is the sum of an integral circulation $C$, an integral $s-t$ flow $P$ of value $k'$, and an integral $t-s$ flow $B$ of value $k'$. To prove its support is strongly connected in $G(V,A)$ it suffices to show that there is a $v-t$ path in the support of $C'$ for every $v \in V$. Suppose $W_i$ is the set of remaining nodes $W$ just after the $i$'th iteration of the loop (and $W_0 = V$). We prove by induction on $L-i$ that every node in $W_i$ has a path to $v$.

Consider the base case $i = L$. Since $P$ only uses arcs in the support of $F$, since $F$ is acyclic (from Step \ref{step:circulation}), and since $F(\delta^+(v)) > 0$ for $v \in W-\{s,t\}$ (Lemmas \ref{lem:bound} and \ref{lem:support}) then any walk from any $v \in W_L$ in the support of $P$ of maximal length must end at $t$. 

Now suppose $i < L$ and that every node in $W_{i+1}$ has a path to $t$ in the support of $C'$. Let $v \in W_i$, if $v \in W_{i+1}$ then, by induction, $v$ has a path to $t$. Otherwise, $v$ must have been removed in Step \ref{step:remove} of the current iteration for some circulation $A$. But then $v_A$, the representative of circulation $A$ that was chosen to remain in $W$, is in $W_{i+1}$, so it has a path to $t$ by induction. Finally, $v$ also has a path to $t$ in the support of $C'$ since it can travel first to $v_A$ in $A$ and then, by induction, to $t$.
\end{proof}


\subsection{Proof of Lemma \ref{lem:sawtooth}} \label{app:pdpoint}
\begin{proof}[~]
Suppose $D$ is not an integer, otherwise we are already done. Let $z$ be any point in $\mathcal P(D)$. Form an undirected and weighted bipartite graph $H = (W_L \cup W_R, E')$ where both $W_L$ and $W_R$ are disjoint copies of $W$. We identify an arc $uv$ of $G$ with an edge $uv$ in $H$ with $u \in W_L$ and $v \in W_R$. That is, for each arc $uv$ add an edge from $u \in W_L$ to $v \in W_R$ with weight $z_{uv}$. By constraints (\ref{lp:pd}), (\ref{lp:pd-d}) and (\ref{lp:pd-zero}), we have that $z(\delta(v))$ (the total $z$-value of all edges in $E'$ incident to $v$) is an integer for every $v \in W_L \cup W_R$ except for $s \in W_L$ and $t \in W_R$.

We claim there is a path from $s \in W_L$ to $t \in W_R$ in $H$ that only uses edges $uv$ with $z_{uv} > 0$. Note that these paths are allowed to take a step from $W_R$ to $W_L$ since $H$ is undirected. Such a step corresponds to following an arc $uv$ in the reverse direction in the original directed graph $G$.

Suppose, for the sake of contradiction, that there is no path from $s \in W_L$ to $t \in W_R$ using only edges $uv$ with $z_{uv} > 0$. Let $S$ be the collection of all nodes in $H$ that can be reached from $s \in W_L$ using only edges with positive $z$-value; our assumptions means the copy of $t$ in $W_R$ is not in $S$.

Let $z(E(S))$ denote the total $z$-value of edges with both endpoints in $S$. On one hand, since every node $v \in W_R-\{t\}$ has $z(\delta(v)) = 1$ and since $t \not\in S$, then 
\[z(E(S)) = \sum_{v \in W_R \cap S} z(\delta(v)) = |W_R \cap S|.\]

On the other hand, since every node $v \in W_L-\{s\}$ has $z(\delta(v)) = 1$ and since $s \in S$, then 
\[z(E(S)) = \sum_{v \in W_L \cap S} z(\delta(v)) = |(W_L-\{s\}) \cap S| + z(\delta(s)) = |(W_L - \{s\}) \cap S| + D.\]
But $|R \cap S| = |(W_L - \{s\}) \cap S| + D$ contradicts our assumption that $D$ is not an integer. So, it must be that there is a path 
from $s \in W_L$ to $t \in W_R$ in $H$ using only edges $e$ with $z_e > 0$.

Suppose that such a path followed a sequence of edges $e_1, e_2, \ldots, e_c$. Since $H$ is bipartite and $s \in W_L, t \in W_R$ are on different sides, then $c$ must be odd.
Let
\[\delta = \min\left\{D - \lfloor D \rfloor, \min_{1 \leq i \leq c : ~ i {\rm ~odd}} z_{e_i}\right\}\]
be the minimum $z$-value of the edges that are followed from $W_L$ to $W_R$ when walking along this path
(or the difference between $D$ and $\lfloor D \rfloor$ if this is smaller).
Update the $z$ values of the edges on this path by:

\[
z_{e_i} \leftarrow \left\{
\begin{array}{ll}
z_{e_i} - \delta & i {\rm ~odd} \\
z_{e_i} + \delta & i {\rm ~even}
\end{array}
\right.
\]

We will now argue that the resulting $z$-values fit in the polytope $\mathcal P(D - \delta)$. First, notice that both $z(\delta(s)), s \in W_L$ and $z(\delta(t)), t \in W_R$, which were originally $D$, decrease by exactly $\delta$. Any other node $v$ not on this path does not have the $z$-value of any incident edge changed. Finally, if $v$ is an internal node on this path then the $z$-value of one incident edge decreases by $\delta$ and the $z$-value of another incident edge increases by $\delta$. Therefore, we have $z(\delta(v)) = 1$ after this update for every internal node $v$ on this path.

By our choice of $\delta$, we maintain $z_e \geq 0$ for every edge $e$. Now, if the path was a single edge $st$ then no edge had its $z$-value increased so the bounds $z_e \leq F_e$ continue to hold for every edge $e$. Otherwise, every edge $e = uv$ on this path has either $z(\delta(u)) = 1$ or $z(\delta(v)) = 1$ so $z_e \leq 1$ must hold after this update. Since the only $z_e$ values that are increased are those in the support of the integral flow $F$, then $z_e \leq 1 \leq F_e$.

The above process maps a point from $\mathcal P(D)$ to a point in $\mathcal P(D - \delta)$ when $D$ is not an integer. If $\delta$ was chosen to be $D - \lfloor D \rfloor > 0$, then $z(\delta(s)) = z(\delta(t)) = \lfloor D \rfloor$ after this process and we are done. Otherwise, we can repeat the process to map a point from $\mathcal P(D - \delta)$ to $\mathcal P(D - \delta')$ for some $\delta' > \delta$ with $D - \delta' \geq \lfloor D \rfloor$ and so on. Each such step that does not result in a point in the polytope $\mathcal P(\lfloor D \rfloor)$ has us remove at least one edge from the support of $z$. Since no edges are introduced to the support of $z$, then this process will terminate in finitely many iterations with a point in $\mathcal P(\lfloor D \rfloor)$.
\end{proof}


\subsection{Proof of Lemma \ref{lem:splice}}\label{app:splice}

\begin{proof}[~]
Let $P'_1$ be the $s_1-t_1$ path and $P'_2$ be the $s_2-t_2$ path in some fixed optimum solution. Also, for nodes $u,v \in V$ we use $u \prec v$ to indicate that both $u$ and $v$ appear on the same path in our optimum solution $P'_1, P'_2$ and that $u$ appears sometime before $v$ on this path.

Let $a,b$ be such that $ab$ is the first arc on $P_1$ with $a \in P'_1$ and $b \in P'_2$. Such an arc exists because $P_1$ starts with a node in $P'_1$ and ends in a node in $P'_2$. Let $x$ be the first node along $P_2$ such that $a \prec x$. We know such a node exists because $a \prec t_1$ and $t_1 \in P_2$. Now, let $y$ be the furthest node along $P_2$ but still before $x$ such that either $y \in P'_1$ or $y \in P'_2$ and $y \prec b$. Again, such a node exists because $s_2 \in P'_2$ and $s_2 \prec b$.

Suppose $y \in P'_2$ with $y \prec b$. If $y$ is immediately followed by $x$ on $P_2$ then we let $u_1 = a, u_2 = x, v_1 = y, v_2 = b$. Otherwise, say $w$ is the immediate successor of $y$ on $P_2$. By our choice of $y$ we have $w \in P'_2$ and $b \prec w$. In this case, we set $u_1 = b, u_2 = w, v_1 = y, v_2 = b$. This case is illustrated in Figure \ref{fig:splice}.

\begin{figure}
\centering
\includegraphics[scale=0.75]{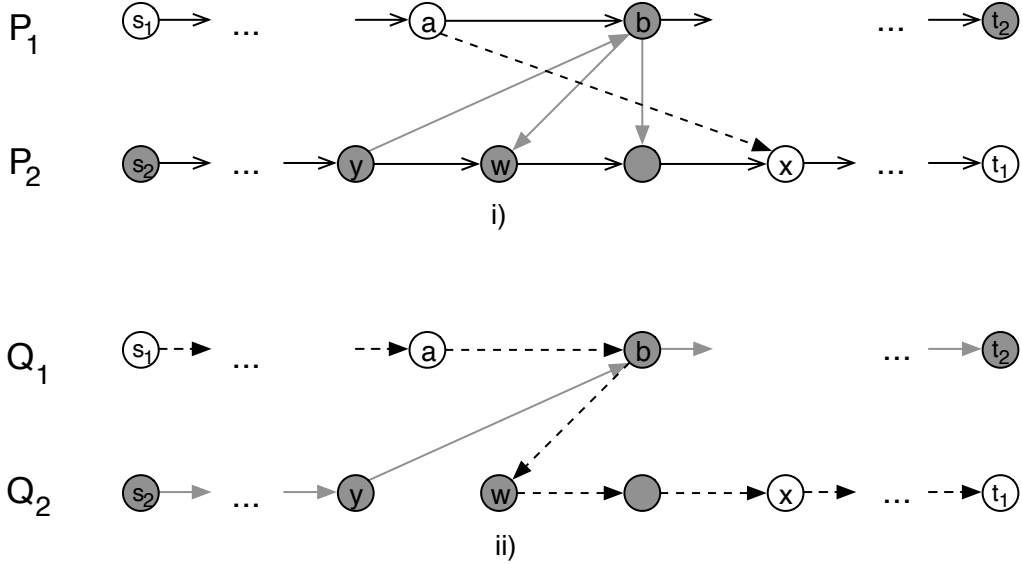}
\caption{
i) An illustration of the case $y \in P'_2$ with immediate successor $w \neq x$ on $P_2$. The white nodes lie on $P'_1$ and the gray nodes lie on $P'_2$. A dashed arc $v \rightarrow v'$ indicates $v \prec v'$ on $P'_1$ and a gray arc $v \rightarrow v'$ indicates $v \prec v'$ on $P'_2$.
ii) The $s_1-t_1$ and $s_2-t_2$ paths $Q_1, Q_2$ formed in the proof of Theorem \ref{thm:general_2atspp} for this case.
}
\label{fig:splice}
\end{figure}

Next, suppose $y \in P'_1$. Let $z$ be the first node on $P_2$ that lies on $P'_1$. Note that $z$ occurs no later than $y$ on $P_2$ (it may be equal to $y$). Now, by our choice of $a$ on $P_1$, every node between $s_1$ and $a$ on $P_1$ also lies on $P'_1$. We also have that $s_1$ appears before $z$ on $P'_1$ (since $s_1$ is the start of $P_1$) and, by our choice of $x$ and the fact that $z$ appears before $x$ on $P_2$, we have that $z$ appears before $a$ on $P'_1$. So, there must be some arc $cd$ on the subpath of $P_1$ starting at $s_1$ and ending at $a$ such that $c$ appears before $z$ on $P'_1$ and $d$ appears after $z$ on $P'_1$. We let $u_1 = c, u_2 = z, v_1 = z, v_2 = d$ in this case.

In all cases, we can easily verify that condition a) and b) in the statement of the lemma are satisfied. Also, in any case we have that one of the following is true:
\begin{enumerate}
\item $u_1 \prec u_2$ and $v_1 \prec v_2$ with $u_1$ and $v_1$ appearing on different paths in $P'_1, P'_2$
\item $v_1 \prec v_2 = u_1 \prec u_2$
\item $u_1 \prec u_2 = v_1 \prec v_2$
\end{enumerate}
In each case, the triangle inequality implies $d_{u_1u_2} + d_{v_1v_2}$ is a lower bound on the total cost of $P'_1$ and $P'_2$. That is, $d_{u_1u_2} + d_{v_1v_2} \leq OPT$.
\end{proof}


\subsection{Restricted Instances of General $k$-ATSPP} \label{app:general_katspp}

\begin{theorem}
There is a 3-approximation for General $k$-ATSPP in symmetric metrics.
\end{theorem}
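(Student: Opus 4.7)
The approach I would take is a tree-doubling scheme. The plan is to group the pairs into clusters via a cheap spanning structure and then cover each cluster with one ``long'' path that spans it while using direct shortest paths for the remaining pairs in the cluster.

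First I would set up two lower bounds on $OPT$. The trivial one is $\sum_{i=1}^{k} d(s_i,t_i) \le OPT$, since each $P_i$ must at least traverse the distance between its endpoints. For the second, define $F^{*}$ to be a minimum-cost spanning subforest of $G$ such that, for every $i$, the endpoints $s_i$ and $t_i$ lie in the same connected component of $F^{*}$ (and every vertex of $V$ lies in some pair-containing component). The claim $cost(F^{*}) \le OPT$ should hold because the union of the paths of an optimum solution is itself a spanning subgraph of $G$ with each pair connected through its own path; iteratively deleting cycle-closing edges from this union yields a spanning forest of no greater cost that satisfies the pair-connectivity requirement.

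The algorithm then proceeds as follows. Compute $F^{*}$ and, for each connected component $C$ of $F^{*}$ with spanning tree $T_{C}$, list the pairs whose endpoints lie entirely in $C$ as $(s_{i_1},t_{i_1}),\ldots,(s_{i_c},t_{i_c})$. Designate the first such pair as the ``spanning pair'': perform a depth-first traversal of the doubled tree $2T_{C}$ that starts at $s_{i_1}$ and ends at $t_{i_1}$, then shortcut the resulting walk using the symmetric metric to produce a simple $s_{i_1}$--$t_{i_1}$ path $P_{i_1}$ that visits every vertex of $C$ at cost at most $2\,cost(T_{C})$. For each remaining pair $(s_{i_j},t_{i_j})$ with $j \ge 2$, let $P_{i_j}$ be a shortest $s_{i_j}$--$t_{i_j}$ path in $G$, of cost $d(s_{i_j},t_{i_j})$. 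Feasibility is immediate since every vertex of $V$ lies in some component and is visited by that component's spanning-pair path, and summing over components yields
\[
\sum_{i=1}^{k} cost(P_i) \;\le\; 2 \sum_{C} cost(T_{C}) + \sum_{i=1}^{k} d(s_i,t_i) \;=\; 2\,cost(F^{*}) + \sum_{i=1}^{k} d(s_i,t_i) \;\le\; 3\,OPT.
\]

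The hardest step is the efficient computation of $F^{*}$, since a minimum spanning subforest with prescribed pair-connectivity (and with every non-pair vertex absorbed into a pair-containing component) is a Steiner-forest flavored problem. I expect to handle this by exploiting the fact that every vertex must be included: for instance, by adding zero-cost auxiliary edges $s_it_i$ for each pair, computing an MST of the augmented graph, and then removing just enough auxiliary edges so that the remaining original-graph edges form a spanning forest realizing the pair-connectivity constraint. Reconciling this construction with the bound $cost(F^{*}) \le OPT$---so that no approximation slack creeps in---is the delicate part, and is what ultimately controls whether the final ratio is exactly $3$ or only a constant.
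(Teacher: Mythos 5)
Your analysis framework (the two lower bounds and tree doubling inside each component) is sound, but there is a genuine gap at exactly the step you flag: computing $F^{*}$. A minimum-cost spanning forest in which every pair $\{s_i,t_i\}$ must lie in a common component is a Steiner-forest-type problem and is NP-hard to solve exactly (any metric Steiner forest instance becomes an instance of your $F^{*}$ problem after giving each Steiner vertex a zero-distance partner, which is exactly the kind of duplication you already allow), so ``compute $F^{*}$'' is not an available primitive. Your proposed workaround also fails on both counts: an MST of the graph augmented with zero-cost edges $s_it_i$ is forced to be connected, whereas the union of the optimum paths need not be (think of two pairs covering two far-apart clusters), so its cost is not bounded by any multiple of $OPT$; and deleting auxiliary edges can only destroy pair-connectivity, never create it, so the surviving original edges need not satisfy your constraint at all. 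Note also that settling for a $2$-approximation of $F^{*}$ (e.g.\ via Goemans--Williamson) would degrade the ratio to $5$, not $3$. A secondary quibble: in the doubled tree $2T_C$ every degree is even, so there is no Eulerian $s_{i_1}$--$t_{i_1}$ walk; you must first delete one copy of the tree path between $s_{i_1}$ and $t_{i_1}$, or add the direct edge $s_{i_1}t_{i_1}$, before extracting the walk.

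The paper sidesteps the hard forest problem by not asking the forest to keep pairs together. With $X=\{s_1,\ldots,s_k\}\cup\{t_1,\ldots,t_k\}$ (endpoints made distinct), it computes the minimum-weight forest in which every component contains \emph{exactly one} node of $X$; such forests are exactly spanning trees of $G/X$, so this is a single MST computation, and its cost is at most $OPT$ (delete one edge from each optimum path). It then adds the $k$ direct edges $s_it_i$, whose total cost is at most $OPT$ by your first lower bound; after this every component contains exactly one pair. Doubling only the forest edges leaves $s_i$ and $t_i$ as the only odd-degree vertices of their component, so an Eulerian $s_i$--$t_i$ walk exists, and shortcutting gives paths of total cost at most $2\cdot OPT + OPT = 3\cdot OPT$. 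In short, the pairing is paid for by the matching edges $s_it_i$ rather than enforced inside the forest, and that is precisely what keeps the forest computation polynomial and the ratio at $3$.
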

\begin{proof}
First, we may assume that the $2k$ start and end locations are all distinct by duplicating locations that appear multiple times among $s_1, \ldots, s_k, t_1, \ldots, t_k$. Let $X = \{s_1, \ldots, s_k\} \cup \{t_1, \ldots, t_k\}$.

Compute the minimum weight forest $F$ of $G$ where every component contains exactly one node in $X$. Such forests correspond to spanning trees of $G/X$, the graph obtained by contracting all nodes in $X$ and keeping parallel edges. Now, add the set of edges $M = \{s_it_i, 1 \leq i \leq k\}$ to $F$. This results in a forest where each component contains both $s_i$ and $t_i$ from some $1 \leq i \leq k$ pair. In each such component $F_i$, double all edges except $s_it_i$ and find an Eulerian walk from $s_i$ to $t_i$. Follow such a walk and shortcut past repeated nodes to get a path $P_i$ that visits all nodes in the component $F_i$.

Before adding $M$, the cost of $F$ was at most $OPT$ because a forest where each component contains exactly one node in $X$ can be obtained by deleting one edge from every $s_i-t_i$ path in the optimum solution. Also, since the optimum solution consists of $s_i-t_i$ paths for each $1 \leq i \leq k$ and since the single edge $s_it_i$ is the shortest $s_i-t_i$ path, then the cost of $M$ is also at most $OPT$. Finally, since we only doubled the edges from $F$ and shortcutting does not increase the cost of a path, then the final cost is at most twice the cost of $F$ plus the cost of $M$ which, in turn, is at most $3\cdot OPT$.
\end{proof}

\begin{theorem}
There is a $(\lfloor\log_2 (n - k)\rfloor + 1)$-approximation for instances of General $k$-ATSPP in asymmetric metrics when $s_i = t_i$ for all $1 \leq i \leq k$.
\end{theorem}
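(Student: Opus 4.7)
The plan is to adapt the cycle-cover recursion of Frieze, Galbiati, and Maffioli \cite{frieze:galbiati:maffioli} so that the $k$ roots are preserved throughout.

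For any $U$ with $\{s_1,\ldots,s_k\} \subseteq U \subseteq V$, define a \emph{relaxed cycle cover} of $U$ to be a collection of vertex-disjoint cycles covering $U$ in which every root $s_i \in U$ is additionally allowed to be a zero-cost self-loop. A minimum-cost relaxed cycle cover can be computed via minimum-weight perfect matching in the usual bipartite doubling of $U$, augmented with an extra zero-cost edge $(s_i^L, s_i^R)$ for each root $s_i \in U$. By the directed triangle inequality, an optimal General $k$-ATSPP solution may be assumed to consist of $k$ vertex-disjoint cycles, each through its own root; shortcutting such a solution onto $U$ then yields a relaxed cycle cover of $U$ of cost at most $OPT$, so a minimum relaxed cycle cover of any $U$ containing all roots costs at most $OPT$.

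The algorithm sets $U_0 := V$ and in iteration $t \geq 1$ computes a minimum-cost relaxed cycle cover $C_t$ of $U_{t-1}$, chooses a representative for each non-trivial cycle of $C_t$ preferring any root on that cycle over a non-root, and takes $U_t$ to be the set of representatives together with every $s_i$ (no root is ever discarded). It halts once $U_t = \{s_1,\ldots,s_k\}$. To recover the $k$ output cycles, every root that serves as a representative of some cycle of $C_T$ inflates that cycle by recursively substituting each representative with the cycle it was shrunk from in the preceding iteration and shortcutting the resulting closed walk via the triangle inequality; any root not chosen as a representative in $C_T$ already lies on another root's cycle of $C_T$ and is completed with the zero-cost trivial loop.

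The crux of the analysis is a halving bound. Let $r_t := |U_t| - k$ denote the number of non-root vertices that remain active. A non-root can end up in $U_t$ only as the representative of a non-trivial cycle of $C_t$ that contains no root at all (otherwise a root would have been preferred as the representative), and every such root-free cycle uses at least two distinct non-roots of $U_{t-1}$; hence $r_t \leq r_{t-1}/2$. Starting from $r_0 = n-k$, this drives $r_t$ to zero in at most $\lfloor\log_2(n-k)\rfloor + 1$ iterations. Since each $C_t$ costs at most $OPT$ and shortcutting during reconstruction does not increase cost, the total cost of the returned solution is at most $(\lfloor\log_2(n-k)\rfloor + 1)\cdot OPT$.

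The main step I expect to require care is verifying that inflation together with the trivial-loop fallback produces $k$ valid rooted cycles: each non-root is pushed up through a unique chain of representatives to exactly one root (immediate from the preference rule), and any root that shares a cycle with another root at some iteration is eventually either chosen as a representative of its own cycle in a later iteration or simply absorbed into another root's cycle, at which point a zero-cost loop completes its required $s_i$--$s_i$ path at no extra cost.
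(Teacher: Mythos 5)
Your algorithm and its quantitative analysis are essentially the paper's: iterated minimum-cost cycle covers in which zero-cost loops are permitted only at the roots, per-iteration cost at most $OPT$ by shortcutting an optimum solution, roots preferred as survivors, and the observation that the surviving non-roots come one per root-free cycle (each of which has at least two non-roots), giving the halving bound and the $\lfloor\log_2(n-k)\rfloor+1$ iteration count. All of that is sound.

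The gap is in your reconstruction rule. You inflate only those roots that are representatives of cycles of the \emph{final} cover $C_T$, and you justify giving every other root a trivial loop by the claim that ``any root not chosen as a representative in $C_T$ already lies on another root's cycle of $C_T$.'' That claim is false: a relaxed cover may place a root on its own zero-cost self-loop, and a minimum-cost cover will often do exactly that. The failure is not cosmetic. Consider a root $r$ that is the representative of a root-free-of-other-roots cycle in $C_1$, absorbing a set $X$ of non-roots (which are then removed from $U$ forever), and that is covered by self-loops in $C_2,\ldots,C_T$. Under your rule $r$ is not a representative in $C_T$, so it is ``completed with the zero-cost trivial loop,'' and since the nodes of $X$ appear in no other node's chain of representatives, they are left uncovered. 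The same orphaning can occur one level down: a root that accumulated a blob early and later rides as a non-representative on another root's cycle of $C_T$ is, per your recursion (which only substitutes \emph{representatives} by their cycle ``from the preceding iteration''), never unwound. Your own coverage argument (``each non-root is pushed up through a unique chain of representatives to exactly one root'') silently assumes that the root at the top of every chain inflates its full history, which your stated rule does not guarantee. The fix is what the paper does: take the multiset union of all cycles found over all iterations (after shortcutting so that each cycle retains at most one root, the displaced roots being finished with zero-cost loops); every connected component of this union is Eulerian and contains exactly one root, so shortcutting an Euler tour of each component started at its root yields the $k$ rooted cycles at no cost increase. Equivalently, you could repair your recursion by having \emph{every} root unwind its entire accumulated structure through all iterations, including iterations in which it sat on a self-loop, rather than only roots that are representatives in $C_T$.
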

\begin{proof}
Let $r_i$ denote the common start and end location for salesman $i$. Recall that we are allowing a salesman to travel directly from $s_i$ to $t_i$ which corresponds to a loop of cost 0 at $r_i$. Like Frieze {\em et al.} \cite{frieze:galbiati:maffioli}, we consider cycle covers except that we also allow a loop at a root $r_i$ to be a cycle in the cycle cover. We do not allow loops at any node in $V-\{r_1, \ldots, r_k\}$. A minimum cost cycle cover that allows loops at root nodes can be computed efficiently in a manner similar to computing minimum cost cycle covers without loops.

Initialize the set of ``remaining nodes'' $W := V$. While $W \neq \{r_1, \ldots, r_k\}$, we compute a minimum cost cycle cover $C$. Note that the cost of $C$ is at most $OPT$ because shortcutting an optimum solution past nodes in $V-W$ yields a feasible cycle cover on $W$ of cost at most $OPT$. If any cycle in $C$ contains more than one root, then shortcut $C$ past all but one of these roots. The roots that are removed from the cycle are then said to be covered by a loop of cost 0. For each cycle $C_i$ in $C$, if there is a root $r$ in $C_i$ then we remove all of $C_i - \{r\}$ from $W$. Otherwise, we only remove all but any one arbitrarily chosen node in $C_i$ from $W$.

When $W$ is finally reduced to $R$, the union of all cycles found over all iterations will have each component being an Eulerian graph containing exactly one root. So, shortcutting an Eulerian circuit in each component yields the desired cycles. At each step before the final iteration, at least half of the non-root nodes are removed so the number of iterations is at most $\lfloor \log_2 (n-k) \rfloor + 1$. The total cost of these cycles is at most the total cost of all cycles covers each of which has cost at most $OPT$, so the total cost is $(\lfloor \log_2 (n-k) \rfloor + 1) \cdot OPT$.
\end{proof}


\subsection{Proof of Claim \ref{claim:hard}} \label{app:reduce}
\begin{proof}[]
Suppose $T = \{(u_i,v_i,w_i)\}_{i=1}^n$ is a collection of $n$ vertex-disjoint triangles with $u_i \in U, v_i \in V$ and $w_i \in W$. For each such triangle $(u_i,v_i,w_i)$, we have the salesman starting at $u_i \in X_1$ to travel first to $v_i \in X_2$, then to $w_i \in X_3$, and finally to $u_i \in X_4$. Every node in $H'$ is visited since every node is included in some triangle in $T$. By construction, every step taken by a salesman traverses an arc of cost 0 so the total cost is 0.

Conversely, suppose there is a $k$-ATSPP solution that avoids using any arc of positive cost. Then each of the arcs followed by the salesmen must be in increasing order of the levels $X_i$. Since there are only $n$ salesmen and since each salesman can visit only two nodes in addition to their endpoints, then each salesman visits every layer $X_i$. Since no arcs of positive cost were used then every salesman only uses arcs corresponding to edges in $G$. Thus, the nodes visited by each salesman correspond to a triangle in $G$ and these triangles partition the nodes of $G$.
\end{proof}


\subsection{Hardness of General $k$-ATSPP for Smaller $k$} \label{app:smallk}
There is a very simple modification to the proof of Theorem \ref{thm:general_hard} that proves hardness for smaller values of $k$. It is reminiscent of some ``padding'' arguments in  complexity theory.

\begin{theorem} \label{thm:smallk}
For any constant $\epsilon > 0$, instances of general $k$-ATSPP with $k = \Omega(n^\epsilon)$ cannot be approximated within any bounded ratio unless $P = NP$.
\end{theorem}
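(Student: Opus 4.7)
The plan is a standard padding argument applied to the reduction of Theorem \ref{thm:general_hard}. That reduction produces, from a Tripartite Triangle Packing instance with $N$ nodes per part, a General $k$-ATSPP instance $I$ on a graph $H$ with $n_0 = 4N$ nodes and $k_0 = N = n_0/4$ source--sink pairs, and it is NP-hard to decide whether the optimum cost of $I$ is $0$ or at least $1$. To drive $k$ down to $\Omega(n^\epsilon)$, the idea is to blow up the node count while leaving the number of pairs unchanged, by attaching a block of ``free'' padding that one of the existing salesmen can sweep at zero extra cost.

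Concretely, set $M := \lceil n_0^{1/\epsilon}\rceil$, introduce fresh nodes $z_1,\dots,z_M$, and add the cost-$0$ arcs $s_1\to z_1$, $z_j\to z_{j+1}$ for $1\le j<M$, and $z_M\to s_1$. Let $I'$ be the General $k$-ATSPP instance obtained by taking the shortest-paths metric on this extended graph $H'$, keeping the original source--sink pairs. Then $n := n_0 + M$ and $k := k_0 = n_0/4$, so because $M$ dominates $n_0$ we have $n = \Theta(n_0^{1/\epsilon})$, and hence $k = \Omega(n^\epsilon)$, as required.

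For the yes/no correspondence, the yes direction is immediate: if the optimum of $I$ is $0$, let salesman $1$ first traverse the free loop $s_1 \to z_1 \to \dots \to z_M \to s_1$ to cover all padding and then follow its original cost-$0$ triangle path to $t_1$, while every other salesman reuses its original triangle path. For the no direction, the key claim is that the shortest-paths metric of $H'$ restricted to original nodes coincides with the metric of $H$: the only way into or out of the padding block is through $s_1$, so any path through padding between two original nodes $u,v$ costs at least $d_H(u,s_1) + d_H(s_1,v) \ge d_H(u,v)$ by the triangle inequality in $H$. Given this, any solution to $I'$ can be converted into a feasible solution to $I$ by shortcutting padding nodes out of each path (merging consecutive arcs through padding into a single arc between the adjacent original nodes), and the total cost does not increase. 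So if the optimum of $I$ is at least $1$, the same is true for $I'$, and the $0$ vs.\ $\ge 1$ gap is preserved.

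The only step requiring any care is the metric-preservation observation, but it reduces to the fact that the padding block is attached to the rest of the graph only via $s_1$, so the triangle inequality in $H$ rules out any improvement from routing through padding; everything else is routine bookkeeping. An approximation within any bounded ratio for General $k$-ATSPP on the padded instances would therefore distinguish the two cases and solve the underlying Triangle Packing problem, giving the claimed conditional hardness.
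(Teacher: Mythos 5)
Your proposal is correct and is essentially the paper's own padding argument: the paper pads by appending a zero-cost path of length $\Theta(n^{1/\epsilon})$ to the copy of $u_1\in X_4$ and moves the first salesman's sink to the end of that path, whereas you attach a zero-cost cycle of padding at $s_1$ and keep the pairs unchanged, which then requires your (correct) observation that the padded shortest-path metric restricted to the original nodes is unchanged since the padding is reachable only through $s_1$. Both variants preserve the $0$ versus $\ge 1$ gap and give $k=\Omega(n^\epsilon)$, so the difference is only cosmetic.
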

\begin{proof} [Proof (sketch).]
Simply attach a path $P$ of length $\Theta(n^{\frac{1}{\epsilon}})$ to the end of the copy of $u_1 \in X_4$ in $H$ in the reduction from Theorem \ref{thm:general_hard}. Set the cost of all arcs in this path is 0. All start and end locations of the salesmen are the same, except that the first salesman ends at the end of the path $P$ rather than at $u_1 \in X_4$. Now, $k = \Theta(N^{\epsilon})$ where $N$ is the number of nodes in this modified version of $H$.
\end{proof}

In fact, under the Exponential Time Hypothesis we can push the value of $k$ down to polylogarithmic in the hardness reduction.
\begin{theorem} \label{thm:subexp}
There is a constant $c$ such that for any $\delta > 0$, no polynomial-time algorithm for instances of General $k$-ATSPP on $N$ nodes with $k = \Omega(\log^\frac{c}{\delta} N)$ can discern between instances with optimum value 0 and instances with optimum value at least 1 unless 3SAT has can be decided in time $2^{O(n^\delta)}$ where $n$ is the number of variables.
\end{theorem}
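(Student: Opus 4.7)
The plan is to extend the padding argument of Theorem \ref{thm:smallk} using a much longer pad so that the resulting vertex count $N$ grows only quasi-polynomially in the underlying hard-instance size. I would start by chaining together the standard polynomial-time reduction from 3SAT to 3D Matching (as in \cite{garey:johnson}), the reduction from 3D Matching to Tripartite Triangle Packing described immediately before Claim \ref{claim:hard}, and the reduction from Tripartite Triangle Packing to General $k$-ATSPP in the proof of Theorem \ref{thm:general_hard}. Composing these yields a polynomial-time transformation that turns a 3SAT instance $\varphi$ on $n$ variables into a General $k$-ATSPP instance $(H',\{(s_i,t_i)\}_{i=1}^k)$ on $M = \Theta(n^c)$ nodes with $k = \Theta(n^c)$ source/sink pairs, where $c$ is a fixed constant depending only on the blow-up of the 3SAT-to-3D Matching step. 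By the arguments of Theorem \ref{thm:general_hard}, the optimum value is $0$ if $\varphi$ is satisfiable and at least $1$ otherwise.

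Next, exactly as in Theorem \ref{thm:smallk}, I would attach a directed cost-$0$ path $P$ with $L := 2^{\lceil n^\delta \rceil}$ new vertices to the copy of $u_1$ in $X_4$, and redirect the endpoint of the first salesman from that copy of $u_1$ to the tail of $P$. This leaves the decision version of the instance unchanged, and the total number of vertices is now $N = M + L = \Theta(L)$ for all sufficiently large $n$. Therefore $\log N = \Theta(n^\delta)$, and
\[ k = \Theta(n^c) = \Theta\!\left((\log N)^{c/\delta}\right) = \Omega\!\left(\log^{c/\delta} N\right). \]
A hypothetical polynomial-time algorithm that solves the decision version of General $k$-ATSPP in this regime, say running in time $N^b$, would then decide 3SAT in time $N^b = 2^{O(b\, n^\delta)} = 2^{O(n^\delta)}$, which is precisely the conclusion ruled out by the hypothesis.

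The main obstacle, and essentially the only non-trivial point, is pinning down the constant $c$ uniformly and independently of $\delta$: the classical Karp-style reduction from 3SAT produces $\Theta(nm)$ elements in the 3D Matching instance from $n$ variables and $m$ clauses, so to absorb $m$ into $n$ I would first apply the sparsification lemma of Impagliazzo--Paturi--Zane (or restrict attention to 3SAT formulas with $m = O(n)$, which is already enough under ETH) so that the composed reduction blows $n$ up to $O(n^c)$ for a single fixed exponent $c$ that works for every $\delta$. Once $c$ is fixed, the padding step above gives the stated polylogarithmic-in-$N$ bound on $k$, and the subexponential lower bound for 3SAT under ETH translates directly into the claimed impossibility for the $\{0,\ge 1\}$-gap version of General $k$-ATSPP.
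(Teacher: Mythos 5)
Your overall plan---compose the known reductions and then pad with an exponentially long zero-cost path---is the same padding idea the paper uses, but the way you calibrate the pad leaves a real gap. You pad with $2^{\lceil n^\delta\rceil}$ vertices, so $\log N = \Theta(n^\delta)$, and membership in the hard class then requires a \emph{lower} bound $k = \Omega((\log N)^{c/\delta}) = \Omega(n^c)$ for the very same exponent $c$ that upper-bounds the reduction's blow-up. The composed reduction only guarantees $k \le n^c$; the actual partition size depends on the clause count and can be polynomially smaller than $n^c$, so the claim ``$k = \Theta(n^c)$'' is unjustified---which is exactly the difficulty you flag. The patches you propose change the theorem rather than prove it: IPZ sparsification branches into $2^{\epsilon n}$ formulas, so you would only conclude that 3SAT is decidable in time $2^{\epsilon n + O(n^\delta)}$ for every $\epsilon>0$ (an ETH-style conclusion), and restricting to $m = O(n)$ replaces the stated hypothesis by one about sparse 3SAT. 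The statement to be proved is the literal one: a polynomial-time algorithm for the gap problem yields an algorithm for \emph{unrestricted} 3SAT running in time $2^{O(n^\delta)}$ in the number of variables. Note also that no sparsification is needed to get a fixed exponent $c$: after removing duplicate clauses, $m = O(n^3)$, so any polynomial reduction has output size at most $n^c$ for a single constant $c$ independent of $\delta$.

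The paper avoids the two-sided bound entirely by calibrating the pad to the instance actually produced rather than to $n$: with $c$ chosen so that the reduction outputs Tripartite Triangle Packing instances of size at most $n^c$, and $k$ the partition size of the produced instance, it appends a zero-cost path of length $2^{k^{\delta/c}} - 4k$. Then $N = 2^{k^{\delta/c}}$, so $k = (\log_2 N)^{c/\delta}$ holds \emph{by construction}, and the only fact needed about $k$ is the trivial upper bound $k \le n^c$, which bounds both the reduction time and the running time of the hypothetical polynomial-time algorithm by $2^{O(n^\delta)}$. Your construction can be repaired either by adopting this calibration, or by keeping the pad of length $2^{\lceil n^\delta\rceil}$ and instead choosing the constant in the theorem to be an exponent $c'$ with $k \ge n^{c'}$ for every output of the reduction (such a polynomial lower bound does hold, since each variable contributes elements to every part), but as written the step ``$k = \Theta(n^c) = \Theta((\log N)^{c/\delta})$'' does not follow, and the sparsification detour delivers a different hypothesis than the one in the statement.
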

\begin{proof}
Say the reduction from 3SAT to Tripartite Triangle Packing produces instances of size at most $n^c$ for large enough $n$ where $c$ is a constant and $n$ is the number of variables in the 3SAT instance. Let $k$ denote the size of a partition in the Tripartite Triangle Packing instance and note $k \leq n^c$. As in the proof of Theorem \ref{thm:smallk}, append a path to some $u_1 \in X_4$ except make its length $2^{k^{\frac{\delta}{c}}}-4k$. Note that $k = (\log_2 N)^\frac{c}{\delta}$ where $N = 2^{k^\frac{\delta}{c}}$ is the number of nodes in this General $k$-ATSPP instance.

Suppose there is a polynomial-time algorithm for General $k$-ATSPP instances on $N$ nodes that can distinguish between instances with optimum value 0 and instances with optimum value at least 1 when $k \geq (\log_2 N)^\frac{c}{\delta}$. Use this algorithm on the instance constructed in the previous paragraph to distinguish between ``yes'' and ``no'' instances of 3SAT.

The running time of the reduction of 3SAT to General $k$-ATSPP is polynomial in $2^{k^{\frac{\delta}{c}}} \leq 2^{n^\delta}$. Running the polynomial-time algorithm for General $k$-ATSPP on the instance with $N = 2^{k^{\frac{\delta}{c}}}$ nodes yields an algorithm for 3SAT whose running time is $2^{O\left(k^{\frac{\delta}{c}}\right)}$ which is at most $2^{O\left(n^\delta\right)}$.
\end{proof}



\subsection{Minimizing the Salesmen's Makespan} \label{sec:makespan}
Here we consider the variant of $k$-ATSPP where the goal is to minimize the cost of the most expensive path rather than the total cost of all paths. Let $OPT$ denote the minimum value for which there are $k$ paths from $s$ to $t$ of maximum length $OPT$ whose union spans all nodes. We first recall that in the Directed Orienteering problem, we have an asymmetric metric $G = (V,A)$ with arc weights, a budget $B$, and a fixed start node $s$. The goal is to visit the maximum number of nodes with a path starting at $s$ of cost at most $B$.
\begin{theorem} \label{thm:orient}
If there is an $\alpha$-approximation for Directed Orienteering, then we can efficiently find $O(k\alpha\log n)$ paths from $s$ to $t$, each of length $2\cdot OPT$, whose union spans all nodes. If we have an $\alpha$-approximation for the variant of Directed Orienteering when both start and end location are specified, then we can find $O(k \alpha \log n)$ paths of length at most $OPT$.
\end{theorem}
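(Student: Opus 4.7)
The plan is to use the Directed Orienteering approximation as a subroutine in a greedy set-cover style algorithm. Assume for simplicity that we know the target value $OPT$ (found by binary search on candidate budget values, using the procedure below as a feasibility oracle, or by trying geometric values $2^i$, losing only a constant factor).

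Maintain a set $W$ of uncovered nodes, initialized to $V$. In each iteration, invoke the $\alpha$-approximation for Directed Orienteering with source $s$ and budget $B = OPT$ on the shortest-paths metric induced by $W \cup \{s,t\}$. The crucial claim is that the optimum orienteering value on this instance is at least $|W|/k$: after shortcutting the $k$ optimum $s$-$t$ paths past nodes outside $W$, each still starts at $s$, still has length at most $OPT$ by the directed triangle inequality, and their union still covers $W$, so by averaging at least one contains $\lceil |W|/k \rceil$ nodes of $W$. Thus the $\alpha$-approximation returns a walk $Q$ from $s$ of length at most $OPT$ visiting at least $|W|/(\alpha k)$ nodes of $W$.

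To convert $Q$ into an $s$-$t$ path, let $v$ be its endpoint and append the shortest $v$-$t$ path in $G$. Since $v \in V$ lies on some optimal path $P_j^*$, the suffix of $P_j^*$ from $v$ to $t$ witnesses that $d(v,t) \leq OPT$, so the concatenation has length at most $2 \cdot OPT$. Add this path to the output, remove its visited nodes from $W$, and iterate. Because $|W|$ shrinks by a factor of $(1 - 1/(\alpha k))$ per iteration, after $T = \lceil \alpha k \ln n \rceil = O(\alpha k \log n)$ iterations $W \subseteq \{s,t\}$, yielding $O(k\alpha \log n)$ paths, each of length at most $2 \cdot OPT$, whose union spans $V$.

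For the second part, replace the orienteering subroutine with its variant where both endpoints are fixed, running it from $s$ to $t$. The same averaging argument shows the optimum $s$-to-$t$ orienteering value is at least $|W|/k$, so the $\alpha$-approximation returns an $s$-$t$ path of length at most $OPT$ visiting at least $|W|/(\alpha k)$ nodes of $W$; no concatenation is required, so we obtain $O(k\alpha \log n)$ paths each of length at most $OPT$ that cover $V$. The only real subtlety in either part is the guess of $OPT$; the averaging and greedy-covering arguments are standard, and the factor-$2$ loss in the first part is forced exactly by the need to reach $t$ from the endpoint of a path that the orienteering oracle did not constrain to end there.
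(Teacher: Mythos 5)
Your proposal is correct and follows essentially the same route as the paper's proof: a greedy covering loop that repeatedly calls the orienteering approximation on the remaining nodes with budget $OPT$, uses shortcutting of the $k$ optimal paths plus averaging to guarantee a $|W|/(\alpha k)$-fraction of coverage per call, appends a final hop to $t$ of cost at most $OPT$ (omitted in the fixed-endpoint variant), and terminates after $O(k\alpha\log n)$ iterations. The only cosmetic difference is that you guess $OPT$ directly (binary search or geometric guessing), whereas the paper binary searches on the budget $B$ using the loop as a feasibility test and bounds each path by $B+OPT$; both are equally (in)formal on this point.
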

\begin{proof}
We will describe an algorithm that is supplied with a value $B$. If $B \geq OPT$, then it will succeed in finding $O(k \alpha \log n)$ paths of length at most $B+OPT$ whose union covers all nodes. So, by binary searching for $B$ and keeping the smallest value for which the algorithm succeeds, we find $O(k \alpha \log n)$ paths of length at most $2 \cdot OPT$.

The algorithm is simple. Initialize $W = V-\{s,t\}$. Then, for $\Theta(k \alpha \log n)$ iterations use the Directed Orienteering approximation on $W\cup\{s\}$ starting from node $s$ to find a path $P$ of length at most $B$ that covers some nodes. Append $t$ to the end of this path $P$. Remove the nodes of $P-\{s,t\}$ from $W$ and repeat. Note that each path has length at most $B+OPT$ because the final step to $t$ has length at most $OPT$. If $W = \emptyset$ after all iterations, then we say that the algorithm succeeded. Otherwise, we say the algorithm failed.

We claim that the algorithm will succeed if $B \geq OPT$. In each iteration we have, from shortcutting the paths in an optimum solution around nodes in $V-W$, that there is a path of length $OPT \leq B$ that covers at least $|W|/k$ nodes in $W$. So, the Directed Orienteering algorithm with budget $B$ will cover at least $|W|/k\alpha$ of these nodes and the size of $W$ drops by a $\left(1 - \frac{1}{k\alpha}\right)$-fraction. After $\Theta(k \alpha \log n)$ iterations all nodes should be covered.

Finally, we note that we can use essentially the same algorithm to find paths of length $OPT$ if we have a $\alpha$-approximation for Directed Orienteering when we can specify both start and end locations. That is, we don't have to extend the paths found by the Directed Orienteering approximation to end at $t$ since they already end at $t$.
\end{proof}

In the Directed $k$-Stroll problem, we have an asymmetric metric $G = (V,A)$, an integer $k$, and two nodes $s,t$. The goal is to find an $s-t$ path of minimum weight that visits at least $k$ other nodes. We can also use approximation algorithms for this problem to approximate the variant of $k$-ATSPP whose goal is to minimize the cost of the most expensive path.
\begin{theorem} \label{thm:kstroll}
If there is an $\alpha$-approximation for the Directed $k$-Stroll problem, then we can find $O(k \log n)$ paths of length at most $\alpha \cdot OPT$ each.
\end{theorem}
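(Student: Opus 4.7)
The plan is to mirror the iterative covering approach from Theorem \ref{thm:orient}: in each iteration we peel off a single $s$-$t$ path of length at most $\alpha \cdot OPT$ that covers at least a $1/k$ fraction of the remaining uncovered nodes, so that after $O(k \log n)$ iterations every node has been covered. A pleasant feature compared to the Directed Orienteering setting is that we do not need to binary-search on $OPT$, since Directed $k$-Stroll takes the coverage target as input and optimizes cost directly.

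Concretely, initialize $W := V - \{s,t\}$ and repeat the following until $W$ is empty. Let $m := \lceil |W|/k \rceil$. Consider the sub-metric induced on $W \cup \{s,t\}$; this remains an asymmetric metric by the triangle inequality. Run the $\alpha$-approximation for Directed $m$-Stroll on this sub-metric with source $s$ and sink $t$, let $P$ be the returned path, append $P$ to the output list, and remove all nodes of $P - \{s,t\}$ from $W$.

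The correctness of the cost bound rests on showing that the optimum $m$-Stroll value on this sub-metric is at most $OPT$. For this, fix an optimum $k$-ATSPP solution whose longest path has cost at most $OPT$, and shortcut each of its $k$ paths around the nodes in $V - (W \cup \{s,t\})$; by the triangle inequality every resulting path still has cost at most $OPT$, and collectively they still cover every node of $W$. By pigeonhole at least one of these $k$ shortcut paths visits at least $\lceil |W|/k \rceil = m$ nodes of $W$, so the $m$-Stroll optimum on the sub-metric is at most $OPT$, and the approximation returns a path of cost at most $\alpha \cdot OPT$ that covers at least $m$ nodes of $W$.

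A standard calculation then shows that $|W|$ shrinks by a factor of at least $1 - 1/k$ per iteration, so $\lceil k \ln n \rceil + 1 = O(k \log n)$ iterations are enough to drive $|W|$ below $1$, at which point $W$ must be empty. The only mild subtlety, and the step worth double-checking, is that we apply the $k$-Stroll approximation to the induced sub-metric rather than to the full graph; nothing deeper than the triangle inequality is needed, but it should be stated explicitly so that the cost and coverage guarantees transfer without loss back to the original instance.
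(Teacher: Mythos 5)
Your proposal is correct and follows essentially the same route as the paper: iteratively run the Directed $\lceil |W|/k \rceil$-Stroll approximation on the sub-metric induced by $W \cup \{s,t\}$, justify the cost bound by shortcutting an optimum solution (plus pigeonhole), remove the covered nodes, and conclude after $O(k \log n)$ iterations. Your explicit treatment of the pigeonhole step and of working in the induced sub-metric simply fills in details the paper leaves implicit.
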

\begin{proof}
Initialize $W$ to $V-\{s,t\}$ and iterate the following proceduce. Use the Directed $\lceil |W|/k \rceil$-Stroll approximation on $W \cup \{s,t\}$ to find an $s-t$ path $P$ that visits at least $|W|/k$ locations in $W$. By shortcutting an optimum solution, we see that there is such a path on the subgraph induced by $W$ of length at most $OPT$ so the length of $P$ will be at most $\alpha \cdot OPT$. Remove the nodes in $P-\{s,t\}$ from $W$ and repeat. After $O(k \log n)$ iterations, all nodes will be covered by paths of length at most $\alpha \cdot OPT$.
\end{proof}

\end{document}